\documentclass[11pt, reqno]{amsart}
\usepackage[utf8]{inputenc}
\usepackage{parskip}

\usepackage[nospace,noadjust]{cite}
\usepackage{relsize}
\usepackage{cite}
\usepackage{color}
\usepackage[dvipsnames]{xcolor}

\usepackage{tikz-cd}
\usetikzlibrary{calc,fit,matrix,arrows,automata,positioning}
\usetikzlibrary{decorations.pathreplacing,angles,quotes}
\usepackage{colortbl}
\usepackage{hyperref, enumitem}
\hypersetup{
  colorlinks   = true, 
  urlcolor     = blue, 
  linkcolor    = Purple, 
  citecolor   = red 
}
\usepackage{amsmath,amsthm,amssymb}

\usepackage[margin=2.5cm]{geometry}

\usepackage{stmaryrd}

\usepackage{array}
\newcolumntype{x}[1]{>{\centering\arraybackslash\hspace{0pt}}p{#1}}

\theoremstyle{definition}
\newtheorem{theorem}{Theorem}[section]
\newtheorem{definition}[theorem]{{{Definition}}}
\newtheorem{example}[theorem]{{{Example}}}

\newtheorem{remark}[theorem]{{{Remark}}}

\newtheorem{corollary}[theorem]{{{Corollary}}}
\newtheorem{proposition}[theorem]{{{Proposition}}}
\newtheorem{lemma}[theorem]{{{Lemma}}}

\newcommand{\fq}{\mathbb{F}_{q}}
\newcommand{\fqm}{\mathbb{F}_{q^m}}
\newcommand{\nkdm}{[n,k,d]_{q^m/q}}
\newcommand{\numberset}{\mathbb}

\newcommand{\C}{\mathcal{C}}

\newcommand{\F}{\numberset{F}}

\newcommand{\wt}{\textnormal{wt}}

\newcommand{\Fm}{\F_{q^m}}

\newcommand{\Fqmk}{\mathbb{F}_{q^{m}}^{k}}
\newcommand{\Fqmn}{\mathbb{F}_{q^{m}}^{n}}
\newcommand{\Fqm}{\mathbb{F}_{q^{m}}}

\title{On the existence of linear rank-metric intersecting codes}
\usepackage[foot]{amsaddr}
\author{Martino Borello$^{1,2}$}
\author{Olga Polverino$^3$}
\author{Ferdinando Zullo$^3$}
\address{$^1$Universit\'e Paris 8, Laboratoire de G\'eom\'etrie, Analyse et Applications, LAGA, Universit\'e Sorbonne Paris Nord, CNRS, UMR 7539, France.}
\address{$^2$Inria Saclay, France.}
\address{$^3$Dipartimento di Matematica e Fisica, Universit\`a degli Studi della Campania ``Luigi Vanvitelli'', I--\,81100 Caserta, Italy}

\email{martino.borello@univ-paris8.fr, \{olga.polverino,ferdinando.zullo\}@unicampania.it}

\begin{document}

\maketitle

\begin{abstract}
Intersecting codes are a classical object in coding theory whose rank-metric analogue has recently been introduced. Although the definition formally parallels the Hamming-metric case, the structure and parameter constraints of rank-metric intersecting codes exhibit substantially different behavior. It was previously shown that a nondegenerate $[n,k,d]_{q^m/q}$ rank-metric intersecting code must satisfy $2k-1 \le n \le 2m-3$, and the tightness of the upper bound was left open.
Using the geometric interpretation of rank-metric codes via $q$-systems, we prove that the dual subspace associated with a rank-metric intersecting code must satisfy strong evasiveness properties. This connection allows us to derive new restrictions
on the parameters of such codes and to show that the bound $n=2m-3$ can be attained only when $k=3$ and $m\ge 6$.
More generally, we show that $n \leq 2m-\lfloor(k+4)/2\rfloor$.
Moreover, we obtain a geometric characterization of these extremal codes in terms of scattered $\F_q$-subspaces of $\F_{q^m}^3$.
As a consequence, the existence problem for
$[2m-3,3,d]_{q^m/q}$ rank-metric intersecting codes is reduced to the existence of scattered subspaces of dimension $m+3$.
Using known constructions of maximum scattered subspaces, we derive existence results when $m$ is even. Finally, we prove that $[6,3,3]_{q^5/q}$ rank-metric intersecting codes do not exist for any prime power $q$, thus resolving an open problem
posed in \cite{bartoli2025linear}.
\end{abstract}

\medskip

\noindent \textbf{Keywords:} rank-metric code; intersecting code; scattered subspace.\\

\noindent \textbf{Mathematics Subject Classification}. 51E20;   94B27; 05B35.

\medskip

\section*{Introduction}

Intersecting codes are a classical object in coding theory. 
In the Hamming metric, a linear code is said to be \textbf{intersecting} if the supports of any two nonzero codewords share at least one coordinate; \cite{katona1983minimal,miklos1984linear}. 
Since their introduction, intersecting codes have been studied extensively due to their rich combinatorial structure and their connections to several areas such as secret sharing schemes, oblivious transfer, separating systems, frameproof codes, and additive combinatorics; see e.g. \cite{massey1993minimal,brassard2002oblivious,blackburn2003frameproof,randriambololona20132,borello2025geometry,plagne2011application,scotti2024recent}. 
In the binary case, intersecting codes coincide with minimal codes, a class that has attracted significant attention over the past decades.

In parallel, the theory of rank-metric codes has emerged as a central topic in modern coding theory, driven both by applications, most notably in network coding \cite{silva2008rank} and post-quantum cryptography \cite{bartz2022rank}, and by deep connections with finite geometry; see \cite{bartz2022rank,polverino2020connections,sheekeysurvey}. In this paper, for rank-metric codes we will mean only $\fqm$-subspaces of $\fqm^n$ equipped with the rank distance over $\fq$.
Rank-metric codes admit a natural geometric interpretation via $q$-systems, as shown in \cite{Randrianarisoa2020ageometric} (see also \cite{sheekey2019scatterd,alfarano2021linearcutting}). This viewpoint has proven particularly fruitful in the study of extremal families such as MRD codes, minimal rank-metric codes, and related structures (see \cite{alfarano2021linearcutting,zini2021scattered}).

Motivated by these developments, \cite{bartoli2025linear} has introduced and investigated the notion of \textbf{intersecting codes in the rank metric}. 
In this setting, a rank-metric code is said to be rank-metric intersecting if the rank supports of any two nonzero codewords intersect nontrivially. 
While this definition formally parallels the Hamming-metric case, the resulting class of codes exhibits substantially different behavior. 
In particular, rank-metric intersecting codes do not coincide with minimal rank-metric codes, although the two families intersect, and their existence and structure depend in a delicate way on the parameters of the code.
A key insight in the rank-metric setting is the geometric characterization of intersecting codes in terms of $q$-systems proved in \cite{bartoli2025linear}. 
More precisely, a rank-metric code is rank-metric intersecting if and only if its associated $q$-system is not generated by the sum of its intersections with two hyperplanes. 
This geometric viewpoint leads to nontrivial constraints on the parameters of rank-metric intersecting codes, provides new constructions beyond those arising from distance considerations alone, and reveals the presence of parameter ranges where the existence problem remains open.
More precisely, the authors in \cite{bartoli2025linear} showed that for a rank-metric intersecting code in $\Fqmn$ of dimension $k$ we have that
\[ 2k-1\leq n\leq 2m-3. \]
The tightness of the lower bound was proved in the same paper, whereas the tightness of the upper bound was left as an open problem.

The main goal of this paper is to investigate the tightness of the upper bound
$n \le 2m-3$ for rank-metric intersecting codes and to determine when this bound
can actually be attained.
Our approach relies on the geometric interpretation of rank-metric codes via $q$-systems. By analyzing the structure of the $\F_q$-subspace associated with a rank-metric intersecting code and its dual, we obtain new structural constraints on such codes. In particular, we show that the dual subspace associated with a rank-metric intersecting code must satisfy strong evasiveness properties. This connection with evasive subspaces allows us to derive new bounds on the parameters of these codes.
As a consequence, we prove that the maximum possible length $n=2m-3$ can occur only in the case $k=3$ and $m\ge 6$.
More generally, we show that 
\[n \leq 2m-\lfloor(k+4)/2\rfloor,\]
see Theorem \ref{thm:k=3andm6} for a more precise statement.
Moreover, we obtain a geometric characterization of these extremal codes in terms of scattered subspaces.
More precisely, a nondegenerate $[2m-3,k,d]_{q^m/q}$ rank-metric intersecting code exists if and only if $k=3$, $m\ge 6$, and the dual system associated with the code is a scattered $\F_q$-subspace of $\F_{q^m}^3$ of dimension $m+3$; this forces the minimum distance to be $m-1$.
This characterization reduces the existence problem for such codes to the existence of scattered subspaces with prescribed dimension. Using known results on maximum scattered subspaces, we derive existence results when $m$ is even. Finally, we complete the analysis by showing that $[6,3,3]_{q^5/q}$ rank-metric intersecting codes do not exist for any prime power $q$, thus resolving an open problem posed in \cite{bartoli2025linear}.

The paper is organized as follows. In Section~\ref{sec:prel} we recall the necessary background on rank-metric codes, $q$-systems, and $\F_q$-subspaces of $\F_{q^m}^k$. In Section~\ref{sec:intersectingRMcodes} we review the notion of rank-metric intersecting codes and the known parameter bounds. Section~\ref{sec:characterizations} contains our main structural results, where we connect intersecting codes with evasive subspaces and derive new bounds. In Section~\ref{sec:existence} we study the existence problem for codes of length $2m-3$, relating it to scattered subspaces. Finally, in Section~\ref{sec:633codes} we analyze the case $[6,3,3]_{q^5/q}$ and prove its non-existence.

\bigskip

\section{Preliminaries}\label{sec:prel}

\subsection{Rank-metric codes and their geometry}

Rank-metric codes were introduced by Delsarte \cite{de78} in 1978 as subsets of matrices and they have been intensively investigated in recent years because of their applications; we refer to \cite{sheekeysurvey,polverino2020connections}.

In this section we will be interested in rank-metric codes in $\F_{q^m}^n$.

An essential concept is the rank support of an element of $\fqm^n$.
Let $\Gamma=(\gamma_1,\ldots,\gamma_m)$ be an ordered $\fq$-basis of $\F_{q^m}$. For any vector $x=(x_1, \ldots ,x_n) \in \F_{q^m}^n$ define the matrix $\Gamma(x)\in \F_{q}^{m \times n}$, where
$$x_{j} = \sum_{i=1}^m \Gamma (x)_{i,j}\gamma_i, \qquad \mbox{ for all } j \in \{1,\ldots,n\},$$
that is $\Gamma(x)$ is the matrix expansion of the vector $x$ with respect to the $\fq$-basis $\Gamma$ of $\F_{q^m}$ .
The \textbf{rank support} of $x$ is defined as the row span of $\Gamma(x)$:
$$\mathrm{supp}(x)=\mathrm{rowsp}(\Gamma(x)) \subseteq \fq^n.$$

As shown in \cite[Proposition 2.1]{alfarano2021linearcutting}, the support of a vector is independent of $\Gamma$, allowing us to talk about the support of a word without reference to $\Gamma$.

The \textbf{rank} (weight) $w(v)$ of a vector $v=(v_1,\ldots,v_n) \in \F_{q^m}^n$ is defined as $w(v)=\dim (\mathrm{supp}(v))$. 

A \textbf{(linear vector) rank-metric code} $\C $ is an $\F_{q^m}$-subspace of $\F_{q^m}^n$ endowed with the rank distance, where such a distance is defined as $d(x,y)=w(x-y)$, where $x, y \in \F_{q^m}^n$. 
For details and applications we refer to \cite{bartz2022rank,gorla2018codes}.

Let $\C \subseteq \F_{q^m}^n$ be a rank-metric code. We will write that $\C$ is an $[n,k,d]_{q^m/q}$ code (or $[n,k]_{q^m/q}$ code) if $k=\dim_{\F_{q^m}}(\C)$ and $d$ is its minimum distance, that is $d=\min\{d(x,y) \colon x, y \in \C, x \neq y  \}$.

By the classification of $\F_{q^m}$-linear isometries of $\F_{q^m}^n$ (see \cite{berger2003isometries}), we say that two rank-metric codes $\C,\C' \subseteq \F_{q^m}^n$ are \textbf{(linearly) equivalent} if and only if there exists a matrix $A \in \mathrm{GL}(n,q)$ such that
$\C'=\C A=\{vA : v \in \C\}$. 
The codes we will consider are \textbf{nondegenerate}, i.e. those for which the columns of any generator matrix of $\C$ are $\fq$-linearly independent.

The geometric counterpart of a rank-metric code is the so-called $q$-system. 
An $[n,k,d]_{q^m/q}$ \textbf{system} $U$ is an $\F_q$-subspace of $\F_{q^m}^k$ of dimension $n$, such that
$ \langle U \rangle_{\F_{q^m}}=\F_{q^m}^k$ and
$$ d=n-\max\left\{\dim_{\F_q}(U\cap H) \mid H \textnormal{ is an }\F_{q^m}\textnormal{-hyperplane of } \F_{q^m}^k\right\}.$$
Moreover, two $[n,k,d]_{q^m/q}$ systems $U$ and $U'$ are \textbf{equivalent} if there exists an $\F_{q^m}$-isomorphism $\varphi\in\mathrm{GL}(k,q^m)$ such that
$$ \varphi(U) = U'.$$

Thanks to the following result, we can construct a system starting from a nondegenerate rank-metric code and conversely.

\begin{theorem}[\cite{Randrianarisoa2020ageometric}] \label{th:connection}
Let $\C$ be a nondegenerate $[n,k,d]_{q^m/q}$ rank-metric code and let $G$ be a generator matrix.
Let $U \subseteq \F_{q^m}^k$ be the $\F_q$-span of the columns of $G$.
The rank weight of an element $x G \in \C$, with $x \in \F_{q^m}^k$ is
\begin{equation}\label{eq:relweight}
w(x G) = n - \dim_{\fq}(U \cap x^{\perp}),\end{equation}
where $x^{\perp}=\{y \in \F_{q^m}^k \colon \langle x, y\rangle=0\}.$ In particular,
\begin{equation} \label{eq:distancedesign}
d=n - \max\left\{ \dim_{\fq}(U \cap H)  \colon H\mbox{ is an } \F_{q^m}\mbox{-hyperplane of }\F_{q^m}^k  \right\}.
\end{equation}
\end{theorem}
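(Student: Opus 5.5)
The plan is to compute the rank support of a codeword $xG$ directly and compare it with the space $U\cap x^{\perp}$ through a duality argument over $\F_q$. Write $G$ with columns $g_1,\ldots,g_n\in\F_{q^m}^k$. Since $\C$ is nondegenerate, these columns are $\F_q$-linearly independent, so the $\F_q$-linear map
\[
\psi\colon \F_q^n\to U,\qquad \lambda=(\lambda_1,\ldots,\lambda_n)\mapsto \sum_{j=1}^n \lambda_j g_j
\]
is an isomorphism and $\dim_{\F_q}U=n$. The whole argument is to identify $\psi^{-1}(U\cap x^{\perp})$ with the orthogonal complement in $\F_q^n$ of $\mathrm{supp}(xG)$, under the standard dot product of $\F_q^n$.

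First, for $\lambda\in\F_q^n$ we have
\[
\langle x,\psi(\lambda)\rangle=\sum_{j}\lambda_j\langle x,g_j\rangle=\sum_j \lambda_j (xG)_j=(xG)\lambda^{\top},
\]
because the $\lambda_j$ lie in $\F_q$. Next, expand $xG$ in an $\F_q$-basis $\Gamma$ of $\F_{q^m}$, so that $(xG)_j=\sum_i \Gamma(xG)_{i,j}\gamma_i$. Then
\[
(xG)\lambda^\top=\sum_i \gamma_i\bigl(\Gamma(xG)\lambda^\top\bigr)_i .
\]
The $\gamma_i$ are $\F_q$-independent and each entry of $\Gamma(xG)\lambda^\top$ lies in $\F_q$. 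Hence this sum vanishes if and only if $\Gamma(xG)\lambda^\top=0$, that is, if and only if $\lambda$ is orthogonal to every row of $\Gamma(xG)$. This gives
\[
\psi^{-1}(U\cap x^\perp)=\mathrm{supp}(xG)^{\perp}\subseteq\F_q^n .
\]

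Taking dimensions, $\dim_{\F_q}(U\cap x^\perp)=n-\dim \mathrm{supp}(xG)=n-w(xG)$, which is the first identity. For the second, as $x$ ranges over the nonzero vectors of $\F_{q^m}^k$, the space $x^\perp$ ranges over all $\F_{q^m}$-hyperplanes. Minimizing $w(xG)$ over $x\neq0$ therefore gives $d=n-\max_H\dim_{\F_q}(U\cap H)$. Here $d$ is the minimum distance, which equals the minimum nonzero weight by linearity. Moreover $xG\neq0$ for $x\neq0$, because the $g_j$ span $\F_{q^m}^k$ over $\F_{q^m}$ (the rows of $G$ are independent).

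The only delicate step is the $\F_q$-linearity argument that lets $\lambda_j$ pass through the expansion in the basis $\Gamma$. It needs the nondegeneracy assumption to make $\psi$ injective. The argument is also independent of the choice of $G$, since changing $G$ replaces $U$ by an $\F_{q^m}$-equivalent system and $x$ accordingly.
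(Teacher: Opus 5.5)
Your proof is correct. The paper gives no proof of this statement; it quotes it from Randrianarisoa's paper. The usual argument there is a rank--nullity count. The $\F_q$-linear map $U\to\F_{q^m}$, $u\mapsto\langle x,u\rangle$, has kernel $U\cap x^{\perp}$ and image $\langle (xG)_1,\ldots,(xG)_n\rangle_{\F_q}$. The dimension of that image is the column rank of $\Gamma(xG)$, hence equals $w(xG)$, and $\dim_{\F_q}U=n$ by nondegeneracy.

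You use the same map, pulled back to $\F_q^n$ through $\psi$, where it becomes $\lambda\mapsto (xG)\lambda^{\top}$. The difference is that you compute its kernel instead of its image, reading it off as $\mathrm{supp}(xG)^{\perp}$ directly from the row-space definition of the support. So you trade the fact that row rank equals column rank for the fact that $\dim S^{\perp}=n-\dim S$ for the standard form on $\F_q^n$.

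Your route yields the subspace-level identity $\psi(\mathrm{supp}(xG)^{\perp})=U\cap x^{\perp}$, which is strictly more than the weight formula. Since $\mathrm{supp}(c)\cap\mathrm{supp}(c')=\{0\}$ if and only if $\mathrm{supp}(c)^{\perp}+\mathrm{supp}(c')^{\perp}=\F_q^n$, this identity turns the intersecting property directly into the non-$2$-spannability criterion of Theorem~\ref{thm:not2spannable}. The rank--nullity route is shorter if only weights are needed.

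Two minor points. Nondegeneracy plays no role in the $\F_q$-linearity step; it is used only to transfer dimensions through $\psi$ (without it you would get $\dim_{\F_q}(U\cap x^{\perp})=\dim_{\F_q}U-w(xG)$). Your closing remark about independence from $G$ is unnecessary, since the statement concerns an arbitrary fixed generator matrix.
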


The above result allows us to give a one-to-one correspondence between equivalence classes of nondegenerate $[n,k,d]_{q^m/q}$ codes and equivalence classes of $[n,k,d]_{q^m/q}$ systems, see \cite{Randrianarisoa2020ageometric} and also \cite{alfarano2021linearcutting}.
The system $U$ and the code $\C$ as in Theorem \ref{th:connection} are said to be \textbf{associated} and a system associated with a code $\C$ will usually be written as $U_{\C}$.

\subsection{Properties of $\fq$-subspaces in $\Fqmk$}

It is well-known that $\Fqmk$ can be seen as an $\Fqm$-vector space of dimension $k$ and as an $\fq$-vector space of dimension $mk$.
The theory of linear sets  is essentially based on the idea of studying $\fq$-subspaces of $\Fqmk$ and how they behave with respect to $\Fqm$-subspaces of $\Fqmk$.
For instance, given an $\fq$-subspace $U$ of $\Fqmk$ and $W$ an $\Fqm$-subspace of $\Fqmk$, we define the \textbf{weight} of $W$ in $U$ as
\[ \textrm{wt}_U(W)=\dim_{\fq}(W\cap U). \]
A classical problem is the following: given the subspace $U$, what are the possible weights of the subspaces of a fixed dimension with respect to $U$?
We refer to \cite{polverino2010linear,lavrauw2015field} for more details.
In this section, we will rephrase some definitions and properties of linear sets in vector terms in order to fulfill the scope of the paper.
We start with the definition of scattered subspaces, originally introduced by Blokhuis and Lavrauw in \cite{blokhuis2000scattered}.
An $\fq$-subspace of $\Fqmk$ is said to be \textbf{scattered} if
\[ \textrm{wt}_U(\langle v \rangle_{\Fqm})=\dim_{\fq}(U \cap \langle v \rangle_{\Fqm})\leq 1, \]
for any $v \in \Fqmk$.
The dimension of a scattered subspace is bounded as follows.

\begin{theorem}[\cite{blokhuis2000scattered}]\label{thm:scattbound}
    Let $U$ be a scattered $\fq$-subspace in $\Fqm^k$, then
    \[ \dim_{\fq}(U)\leq \frac{mk}2. \]
\end{theorem}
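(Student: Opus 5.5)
The plan is the classical double-counting argument of Blokhuis and Lavrauw. Let $U$ be a scattered $\fq$-subspace of $\Fqm^k$ of dimension $n$. Scatteredness says that every nonzero $u\in U$ spans, together with $U$, exactly the one-dimensional $\fq$-space $\langle u\rangle_{\fq}$; that is,
\[ U\cap \langle u\rangle_{\Fqm}=\langle u\rangle_{\fq}. \]
Equivalently, for $\lambda\in\Fqm\setminus\fq$ and nonzero $u\in U$, the vector $\lambda u$ does not lie in $U$. I would turn this into a dimension estimate by comparing $U$ with a suitable $\Fqm$-multiple.

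The key step is the following. Fix $\lambda\in\Fqm\setminus\fq$, chosen so that $\fq(\lambda)=\Fqm$; a primitive element works. I would consider $U\cap\lambda U$. If $\lambda u=u'$ with $u,u'\in U$ nonzero, then $u'\in U\cap\langle u\rangle_{\Fqm}=\langle u\rangle_{\fq}$, so $\lambda\in\fq$, a contradiction. Hence $U\cap \lambda U=\{0\}$. Both $U$ and $\lambda U$ are $\fq$-subspaces of dimension $n$ inside $\Fqm^k$, which has $\fq$-dimension $mk$. Therefore
\[ 2n=\dim_{\fq}(U+\lambda U)\le mk, \]
which gives $n\le mk/2$ directly.

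I should double-check the one point where this could fail. The hypothesis is $\dim_{\fq}(U\cap\langle v\rangle_{\Fqm})\le 1$, and it applies to $v=u$. Since $u'=\lambda u\in\langle u\rangle_{\Fqm}\cap U$, and this intersection already contains $u$, the dimension bound forces $u'=cu$ with $c\in\fq$. This gives $\lambda=c\in\fq$, so the argument holds and only needs $m\ge 2$. For $m=1$ the bound $n\le k/2$ would be false, but then $\Fqm\setminus\fq$ is empty; in that degenerate case the statement should be read as $n\le k$ with $m=1$. The bound $mk/2$ is the intended one for $m\ge2$. In fact any $\lambda\notin\fq$ suffices, so there is no need to choose a primitive element.

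The main obstacle is only this bookkeeping about the trivial case $m=1$. Otherwise the proof is a direct two-line disjointness argument, and I expect no further difficulty.
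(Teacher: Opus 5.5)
Your proof is correct: scatteredness forces $U\cap\lambda U=\{0\}$ for any $\lambda\in\mathbb{F}_{q^m}\setminus\mathbb{F}_q$, hence $2\dim_{\mathbb{F}_q}(U)\le mk$, and you are right that the statement implicitly assumes $m\ge 2$. The paper gives no proof of its own and only cites Blokhuis and Lavrauw; your disjointness argument between $U$ and a scalar multiple $\lambda U$ is the standard proof of this bound.
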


More recently, some attention has been paid to $\fq$-subspaces in $\Fm^k$ which are \emph{evasive subspaces}. 
This definition extends the notion of scattered subspaces, later extended to $h$-scattered subspaces in \cite{lunardon2017mrd,sheekeyVdV,csajbok2021generalising}; see also \cite{bartoli2021evasive,gruica2022generalised}.

\begin{definition}
    Let $\mathbb{V}(k,q^m)$ be an $\F_{q^m}$-vector space of dimension $k$ and let $U$ be an $\fq$-subspace of $\mathbb{V}(k,q^m)$ with $\dim_{\fq} (U)=n$. Let $h$ and $r$ be positive integers such that $h \in \{1,\ldots,k\}$ and $h \leq r \leq km$. We say that $U$ is an $(h,r)$\textbf{-evasive} subspace if  $$\dim_{\fq}(U\cap T)\leq r$$
    for each $h$-dimensional $\fqm$-subspace $T$. If $r=h$, an $(h,h)$-evasive subspace is called \textbf{$h$-scattered} subspace. Furthermore, if $h=1$, then a $1$-scattered subspace will be simply called a \textbf{scattered} subspace.
    \end{definition}
    
    By \cite[Theorem 2.3]{csajbok2021generalising}, if $U$ is an $h$-scattered  $\fq$-subspace of $\mathbb{V}(k,q^m)$ with $\dim_{\fq}(U) >k$, then  $$\dim_{\fq}(U) \leq \frac{km}{h+1}.$$ Subspaces that reach the equality in the previous bound are called {\bf maximum $h$-scattered subspaces}.

In particular, the following bound has been proved.

\begin{theorem}[Corollary 4.9 in \cite{bartoli2021evasive}]\label{thm:boundevasive}
    Let $U$ be an $\fq$-subspace of an $\F_{q^m}$-vector space of dimension $k$.
    If $k<m$ and is an $(h,r)$-evasive subspace in $\fqm^k$, then
    \[ \dim_{\fq}(U)\leq mk-\frac{mkh}{r+1}. \]
\end{theorem}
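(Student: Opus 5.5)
The plan is to prove the contrapositive. Assume $\dim_{\fq}(U)=n>mk-\frac{mkh}{r+1}$, or equivalently $(r+1)(mk-n)<mkh$. From this I will produce an $h$-dimensional $\fqm$-subspace $T$ with $\dim_{\fq}(U\cap T)\ge r+1$. The model is the classical argument behind Theorem~\ref{thm:scattbound} and its $h$-scattered analogue from \cite{csajbok2021generalising}. There one picks $\fq$-independent scalars $\lambda_0=1,\lambda_1,\dots,\lambda_h\in\fqm$ and notes that $U+\lambda_1U+\dots+\lambda_hU\subseteq\fqm^k$ has $\fq$-dimension at most $mk<(h+1)n$. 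A nontrivial relation $\sum\lambda_iu_i=0$ then puts $h+1$ vectors of $U$ inside an $\fqm$-space of dimension at most $h$. The hypothesis $k<m$ is what lets the $\lambda_i$ be chosen inside $\fqm$ with enough freedom, and is also the regime where the bound is not already trivial.

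\textbf{Step 1 (relations).} Fix a $t$-dimensional $\fq$-subspace $\Lambda=\langle\lambda_1,\dots,\lambda_t\rangle_{\fq}\subseteq\fqm$, where $t$ is to be tuned; the natural first choice is $t=r+1$. Consider the $\fq$-linear map
\[\Phi\colon U\otimes_{\fq}\Lambda\to\fqm^k,\qquad \sum_i u_i\otimes\lambda_i\mapsto\sum_i\lambda_iu_i .\]
Its kernel $K$ has $\fq$-dimension at least $tn-mk$. In code language, $K$ is the intersection of the dual code of $U$ (inside $U\otimes\fqm\cong\fqm^n$) with the transposed anticode of matrices whose column space lies in $\Lambda$.

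\textbf{Step 2 (relations shrink the $\fqm$-span).} Let $W\le U$ be an $\fq$-subspace of dimension $w$, and let $K_W=K\cap(W\otimes\Lambda)$. I claim
\[\dim_{\fqm}\langle W\rangle_{\fqm}\le w-\tfrac{1}{t}\dim_{\fq}K_W .\]
The reason is that $\langle W\rangle_{\fqm}$ is the image of $W\otimes\fqm$, and the kernel of that map contains the $\fqm$-span of $K_W$. An $\fqm$-subspace of $W\otimes\fqm$ of dimension $j$ meets $W\otimes\Lambda$ in $\fq$-dimension at most $jt$: project onto $j$ suitable coordinates and use that each coordinate then lies in $\Lambda$.

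\textbf{Step 3 (localization).} Applied to $W=U$, Step 2 only says that $\langle U\rangle_{\fqm}$ is small, which is useless. So I must localize: choose $W\le U$ with $w\ge r+1$ such that $K_W$ is still large, namely $\dim K_W\ge t(w-h)$. Then $\langle W\rangle_{\fqm}$ lies in an $h$-dimensional $T$ with $\dim_{\fq}(U\cap T)\ge r+1$, which is the desired contradiction.

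To find such a $W$, I would first try to build it greedily from a basis of $K$, taking $W$ to be the span of the $U$-components of a subspace of $K$. Failing that, I would average $\dim K_W$ over all $w$-dimensional $W\le U$. By a standard Gaussian-binomial count, the expected value of $\dim_{\fq}K_W$ is roughly $\dim K-t(n-w)\ge tw-mk$. The numerical condition $(r+1)(mk-n)<mkh$ should be exactly what makes $tw-mk\ge t(w-h)$ achievable for some $w\ge r+1$ with $t=r+1$.

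I expect this localization step to be the main obstacle. Averaging gives information about $q^{\dim K_W}$ rather than about $\dim K_W$ itself, and the inequality has to be checked carefully at the boundary. If the direct argument fails, the fallback is the dual route. Use $U\leftrightarrow$ code $\C$: an $h$-dimensional $T$ corresponds to a $(k-h)$-dimensional subcode of $\C$, and $n-\dim(U\cap T)$ is its generalized rank weight. Then apply a Ravagnani-type anticode bound with column-anticodes of dimension $nt$ in place of row-anticodes, which is where the factor $m$ enters the final inequality.
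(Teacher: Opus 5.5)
The paper gives no proof of this statement; it is imported from \cite{bartoli2021evasive}, so your proposal has to stand on its own. It has a genuine gap. Steps 1 and 2 are correct, but the entire content of the theorem lies in Step 3, which you do not supply, and the route you sketch for it cannot work. Since $W\otimes\Lambda$ has codimension $t(n-w)$ in $U\otimes\Lambda$, the bound $\dim_{\fq}K_W\ge tw-mk$ holds for \emph{every} $W$, with no averaging needed. But $tw-mk\ge t(w-h)$ is equivalent to $th\ge mk$, which involves neither $n$ nor $w$. So your hypothesis $(r+1)(mk-n)<mkh$ never enters, and for $t\le m$ and $h<k$ the inequality is simply false. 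Averaging $q^{\dim K_W}$ could only do better if $K$ contained many relations whose $U$-components span small subspaces, and a dimension count of $K$ gives no such information. Worse, the criterion from Step 2 is too lossy to be implied even by the conclusion you want. Take $h=1$ and $W=\Lambda v\subseteq U$, so that $\langle v\rangle_{\fqm}$ already has weight $r+1$. Then $\dim_{\fq}K_W=(r+1)^2-\dim_{\fq}\langle\Lambda\Lambda\rangle_{\fq}$, and this reaches your threshold $t(w-h)=r(r+1)$ only when $\Lambda$ is a scalar multiple of a subfield $\F_{q^{r+1}}$ of $\fqm$. The greedy variant and the anticode fallback are not yet arguments.

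The missing idea is to fix the coefficients and solve for the vectors, instead of fixing vectors of $U$ and collecting relations among them. Take $\Lambda\subseteq\fqm$ with $\dim_{\fq}\Lambda=r+1$, and set $M=\{(x,x^q,\dots,x^{q^{h-1}}) : x\in\Lambda\}\subseteq\fqm^h$. The $\fqm$-linear maps $\phi\colon\fqm^h\to\fqm^k$ with $\phi(M)\subseteq U$ form an $\fq$-space of dimension at least $mkh-(r+1)(mk-n)$, which is positive exactly under your assumption. Let $\phi\ne 0$ have kernel $Z$ of dimension $z<h$. Then $\dim_{\fq}(M\cap Z)\le\dim_{\fq}R$, where $R\subseteq\fqm$ is the common root space of $h-z$ independent $q$-polynomials of $q$-degree less than $h$. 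The $q$-polynomials of $q$-degree less than $h$ vanishing on $R$ form an $\fqm$-space of dimension $\max\{h-\dim_{\fq}R,0\}$ (right-divide by the subspace polynomial of $R$), so $\dim_{\fq}R\le z$. Hence $\phi(M)\subseteq U\cap\mathrm{im}\,\phi$ has $\fq$-dimension at least $r+1-z$ inside an $(h-z)$-dimensional $\fqm$-space. Adjoining $z$ vectors of $U$ (using $\langle U\rangle_{\fqm}=\fqm^k$) then gives an $h$-dimensional $T$ with $\dim_{\fq}(U\cap T)\ge r+1$. For $h=1$ this is just $\bigcap_i\lambda_i^{-1}U\ne\{0\}$ for an $\fq$-basis $(\lambda_i)$ of $\Lambda$. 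This argument, like your own choice $t=r+1$, needs $r<m$, and that (not $k<m$) is the operative hypothesis. As quoted, the statement fails for $k=h=1$, $m=2$, $r=2$, $U=\F_{q^2}$, which is $(1,2)$-evasive of dimension $2>4/3$.
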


In particular, the dimension of $(1,r)$-evasive subspaces in $\fqm^k$ is bounded by
\[ \dim_{\fq}(U)\leq\frac{mkr}{r+1}. \]

When considering $2$-dimensional $\Fqm$-subspaces $W=P_1+P_2$ (where $P_1$ and $P_2$ are two distinct $1$-dimensional $\Fqm$-subspaces), from a simple linear algebra argument we have that, for any $\fq$-subspace $U \subseteq W$,
\[ \mathrm{wt}_U(P_1+P_2)\geq \mathrm{wt}_U(P_1)+\mathrm{wt}_U(P_2). \]
If the equality holds we say that $U$ is with \textbf{complementary weights}.
This means that the subspace $U$ is the direct sum of $P_1 \cap U$ and $P_2 \cap U$.
This definition was introduced in \cite{napolitano2022linear} and extended in \cite{adriaensen2023minimum}.

We conclude this section by describing the following duality operation, which preserves $\Fqm$-linearity when acting on the $\Fqm$-linear subspaces.

Let $\sigma \colon \Fqmk\times \Fqmk \rightarrow \mathbb{F}_{q^m}$ be a nondegenerate reflexive bilinear form over $\Fqmk$. Define
$\sigma' \colon \Fqmk \times \Fqmk \rightarrow \mathbb{F}_q$ by 
\[\sigma':(u,v)\mapsto \mathrm{Tr}_{q^m/q}(\sigma(u,v))=\sigma(u,v)+\sigma(u,v)^q+\ldots+\sigma(u,v)^{q^{m-1}} .\]
If we regard $\Fqmk$ as an $\F_q$-vector space, then $\sigma^\prime$ turns out to be a nondegenerate reflexive bilinear form on $\Fqmk$.
Let $\perp$ and $\perp'$ be the orthogonal complement maps defined by $\sigma$ and $\sigma'$ on the lattices of $\F_{q^m}$-linear and $\F_q$-linear subspaces, respectively.
The following properties hold (see \cite[Section~2]{polverino2010linear} for more details).

\begin{proposition}\label{prop:dualityproperties}
With the above notation,
\begin{itemize}
    \item[(i)] $\dim_{\F_{q^m}}(W)+\dim_{\F_{q^m}}(W^\perp)=k$, for every $\F_{q^m}$-subspace $W$ of $\Fqmk$.
    \item[(ii)] $\dim_{\F_{q}}(U)+\dim_{\F_{q}}(U^{\perp'})=mk$, for every $\F_{q}$-subspace $U$ of $\Fqmk$.
    \item[(iii)] $T_1\subseteq T_2$ implies  $T_1^{\perp'}\supseteq T_2^{\perp'}$, for every $\F_q$-subspaces $T_1,T_2$ of $\Fqmk$.
    \item[(iv)] $W^\perp=W^{\perp'}$, for every $\F_{q^m}$-subspace $W$ of $\Fqmk$.
    \item[(v)] Let $W$ and $U$ be an $\F_{q^m}$-subspace and an $\F_q$-subspace of $\Fqmk$ of dimension $s$ and $t$, respectively. Then
    \[\dim_{\F_q}(U^{\perp'}\cap W^{\perp'})-\dim_{\F_q}(U\cap W)=mk-t-sm, \]
    i.e. 
    \begin{equation}\label{eq:dualweight} \mathrm{wt}_{U^{\perp'}}(W^{\perp})-\mathrm{wt}_U(W)=mk-\dim_{\F_q}(U) -m \dim_{\F_{q^m}}(W). \end{equation}
\end{itemize}
\end{proposition}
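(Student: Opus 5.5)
Items (i)--(iv) are structural facts about the two forms, and (v) then follows from a dimension count using them. So I will first establish that $\sigma'$ is a nondegenerate reflexive $\F_q$-bilinear form, then derive (i)--(iv), and finally (v).

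\emph{Properties of $\sigma'$.} Bilinearity over $\F_q$ is clear, since $\mathrm{Tr}_{q^m/q}$ is $\F_q$-linear. Reflexivity follows from that of $\sigma$: if $\sigma(u,v)=0 \Leftrightarrow \sigma(v,u)=0$, the same holds after taking traces. To see this, note that a reflexive nondegenerate form is symmetric or alternating, so $\mathrm{Tr}(\sigma(u,v))=\pm\mathrm{Tr}(\sigma(v,u))$. For nondegeneracy, suppose $\mathrm{Tr}(\sigma(u,v))=0$ for all $v$. The map $v\mapsto\sigma(u,v)$ is $\F_{q^m}$-linear, so its image is either $\{0\}$ or all of $\F_{q^m}$. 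Since the trace is not identically zero, the image must be $\{0\}$. Nondegeneracy of $\sigma$ then gives $u=0$.

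\emph{Items (i)--(iv).} With nondegeneracy in hand, (i) and (ii) are the standard dimension formulas for orthogonal complements under a nondegenerate form on a finite-dimensional space, over $\F_{q^m}$ and over $\F_q$ respectively. Item (iii) is immediate from the definition. For (iv), clearly $W^\perp\subseteq W^{\perp'}$. Conversely, let $u\in W^{\perp'}$ and $w\in W$. Since $\lambda w\in W$ for every $\lambda\in\F_{q^m}$, we get $\mathrm{Tr}(\lambda\,\sigma(u,w))=0$ for all $\lambda$. The trace bilinear form $(a,b)\mapsto\mathrm{Tr}(ab)$ on $\F_{q^m}$ is nondegenerate, so $\sigma(u,w)=0$ and hence $u\in W^\perp$.

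\emph{Item (v).} This is the only step requiring more than unwinding definitions, though it is routine. By (ii) and reflexivity, $T^{\perp'\perp'}=T$ for every $\F_q$-subspace $T$. Combined with (iii), this makes $\perp'$ an inclusion-reversing involution of the subspace lattice, so it exchanges sums and intersections:
\[
(U\cap W)^{\perp'}=U^{\perp'}+W^{\perp'}.
\]
By (i) and (iv), $W^{\perp'}=W^\perp$ has $\F_q$-dimension $m(k-s)=mk-sm$; by (ii), $\dim_{\F_q}U^{\perp'}=mk-t$. Grassmann's formula now gives
\[
\dim_{\F_q}(U^{\perp'}\cap W^{\perp'})=(mk-t)+(mk-sm)-\bigl(mk-\dim_{\F_q}(U\cap W)\bigr),
\]
which rearranges to the stated identity. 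Finally, \eqref{eq:dualweight} is the same identity rewritten using (iv).
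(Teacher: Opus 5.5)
Your proof is correct. The paper does not prove this proposition itself; it cites \cite[Section~2]{polverino2010linear}. Your argument is the standard self-contained one behind that reference:
\begin{itemize}
\item nondegeneracy of $\sigma'$ from the fact that the trace is not identically zero;
\item $W^\perp=W^{\perp'}$ from nondegeneracy of the trace form $(a,b)\mapsto \mathrm{Tr}_{q^m/q}(ab)$;
\item (v) by a dimension count.
\end{itemize}

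Two minor points.

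First, your opening justification of reflexivity (``the same holds after taking traces'') does not hold on its own, because $\mathrm{Tr}_{q^m/q}(\sigma(u,v))=0$ does not imply $\sigma(u,v)=0$. What actually proves reflexivity of $\sigma'$ is the symmetric/alternating dichotomy you invoke right afterwards. It gives $\mathrm{Tr}_{q^m/q}(\sigma(u,v))=\pm\mathrm{Tr}_{q^m/q}(\sigma(v,u))$, and reflexivity follows.

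Second, in (v) you can skip the involution step. The identity $U^{\perp'}\cap W^{\perp'}=(U+W)^{\perp'}$ follows directly from the definition. Then (ii) and Grassmann's formula give $\dim_{\F_q}(U^{\perp'}\cap W^{\perp'})=mk-t-sm+\dim_{\F_q}(U\cap W)$ at once, with no need for $T^{\perp'\perp'}=T$.
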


\bigskip

\section{Rank-metric intersecting codes}\label{sec:intersectingRMcodes}

The notion of rank-metric intersecting code has been introduced in \cite{bartoli2025linear} as follows.

\begin{definition}
    A code $\C$ is \textbf{rank-metric intersecting} if for any $c,c' \in \C\setminus\{0\}$ we have
    \[\mathrm{supp}(c)\cap \mathrm{supp}(c')\ne \{0\}.\]
\end{definition}

In order to describe the geometric counterpart of these codes, we give the following definition.

\begin{definition}
    Let $U$ be an $[n,k]_{q^m/q}$ system. We say that $U$ is \textbf{2-spannable} if there exist two $\Fqm$-linear hyperplanes $H_1$ and $H_2$ in $\Fqmk$ such that
    \[ U=(H_1\cap U)+(H_2\cap U). \]
\end{definition}

The authors in \cite{bartoli2025linear} proved that the geometric counterpart of an rank-metric intersecting code is a system that is not $2$-spannable.

\begin{theorem}[Theorem 3.4 in \cite{bartoli2025linear}] \label{thm:not2spannable}
Let $\C$ be a nondegenerate $[n,k,d]_{q^m/q}$ code. The code $\C$ is rank-metric intersecting if and only if $U_\C$ is not $2$-spannable.
\end{theorem}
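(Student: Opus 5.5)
Let $U=U_\C$ be the $[n,k,d]_{q^m/q}$ system associated with $\C$ as in Theorem~\ref{th:connection}; it is spanned by the $\fq$-span of the columns of a generator matrix $G$. The plan is to translate both sides of the equivalence into statements about supports of codewords $xG$ and about the intersections $U\cap x^\perp$, and then to match them. The key identity, which I would prove first, is the following. For $x\in\Fqmk\setminus\{0\}$, let $\pi\colon\fq^n\to U$ be the $\fq$-isomorphism $\lambda\mapsto G\lambda^\top$; it is injective by nondegeneracy. Then
\[ \mathrm{supp}(xG)^{\perp}=\pi^{-1}(U\cap x^\perp), \]
where $\perp$ on the left is the standard dot product in $\fq^n$. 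To prove it, note that $\lambda\in\fq^n$ lies in $\mathrm{supp}(xG)^\perp$ iff $\Gamma(xG)\lambda^\top=0$. Since $\lambda$ has entries in $\fq$, this holds iff $xG\lambda^\top=0$, i.e. iff $G\lambda^\top\in x^\perp$. This matches the dimension count in Theorem~\ref{th:connection}.

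Next I would use that every $\Fqm$-hyperplane of $\Fqmk$ is of the form $x^\perp$. Take nonzero codewords $c=xG$ and $c'=x'G$, and set $H_1=x^\perp$, $H_2=x'^\perp$. Then
\[ \mathrm{supp}(c)\cap\mathrm{supp}(c')=\{0\} \iff \mathrm{supp}(c)^\perp+\mathrm{supp}(c')^\perp=\fq^n, \]
by taking orthogonal complements in $\fq^n$ with the nondegenerate dot product. Applying the bijection $\pi$, this holds iff $(U\cap H_1)+(U\cap H_2)=U$.

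Hence $\C$ fails to be intersecting iff there exist hyperplanes $H_1,H_2$ with $U=(H_1\cap U)+(H_2\cap U)$, i.e. iff $U$ is $2$-spannable. Taking contrapositives gives the theorem. There is one small point: the definition does not require $H_1\ne H_2$, and the correspondence between hyperplanes and nonzero codewords (up to $\Fqm$-scalars) is surjective, so no case is lost. When $H_1=H_2$ we would need $U\subseteq H_1$, which is impossible since $\langle U\rangle_{\Fqm}=\Fqmk$; this matches the fact that a nonzero codeword has nonzero support.

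The main obstacle is the support identity, specifically justifying that $\Gamma(xG)\lambda^\top=0$ iff $xG\lambda^\top=0$ for $\fq$-vectors $\lambda$. This follows by expanding $xG\lambda^\top=\sum_i\gamma_i(\Gamma(xG)\lambda^\top)_i$ and using that the $\gamma_i$ are $\fq$-linearly independent. The rest is linear algebra.
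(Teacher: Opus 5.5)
Your proof is correct. The identity $\mathrm{supp}(xG)^{\perp}=\pi^{-1}(U\cap x^{\perp})$, the equivalence $\mathrm{supp}(c)\cap\mathrm{supp}(c')=\{0\}\iff\mathrm{supp}(c)^{\perp}+\mathrm{supp}(c')^{\perp}=\fq^n$, and the correspondences between nonzero codewords, nonzero $x$, and hyperplanes $x^{\perp}$ are all properly justified, and the case $H_1=H_2$ is handled.

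The paper itself gives no proof here; it quotes the result from \cite{bartoli2025linear}. Your argument is the standard one, resting on the same support--hyperplane correspondence from \cite{alfarano2021linearcutting} that underlies Theorem~\ref{th:connection}. The one point you leave implicit is that $2$-spannability does not depend on which associated system $U_\C$ is chosen. This is immediate, since $\mathrm{GL}(k,q^m)$ permutes $\Fqm$-hyperplanes.
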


Combining the geometric point of view and some results from \cite{borello2025geometry}, the authors in \cite{bartoli2025linear} provided some bounds on the length of a rank-metric intersecting code.

\begin{theorem}[Corollary 2.4 and Theorem 4.7 in \cite{bartoli2025linear}]\label{thm:boundsn}
    Let $\C$ be a nondegenerate $[n,k,d]_{q^m/q}$ rank-metric intersecting code. Then
    \[ 2k-1 \leq n \leq 2m-3. \]
\end{theorem}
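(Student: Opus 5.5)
The lower and upper bounds need different arguments. Both rest on Theorem~\ref{thm:not2spannable}: $\C$ is intersecting exactly when $U=U_\C$ is not $2$-spannable. So in each case I will assume the inequality fails and exhibit two hyperplanes $H_1,H_2$ with $U=(U\cap H_1)+(U\cap H_2)$.

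\emph{Lower bound.} Suppose $n\le 2k-2$. Split an $\F_q$-basis of $U$ into two sets $A$ and $B$, each with at most $k-1$ elements. Then $\langle A\rangle_{\Fqm}$ has $\Fqm$-dimension at most $k-1$, so it lies in some $\Fqm$-hyperplane $H_1$; likewise $\langle B\rangle_{\Fqm}\subseteq H_2$. Hence
\[U=\langle A\rangle_{\F_q}+\langle B\rangle_{\F_q}\subseteq (U\cap H_1)+(U\cap H_2)\subseteq U,\]
so $U$ is $2$-spannable. Therefore $n\ge 2k-1$.

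\emph{Upper bound: dualize.} Work with $U^{\perp'}$, which has $\F_q$-dimension $mk-n$ by Proposition~\ref{prop:dualityproperties}(ii). Set $P_i=H_i^{\perp}$ (points) and $L=P_1+P_2=(H_1\cap H_2)^\perp$ (a line). Apply \eqref{eq:dualweight} to $H_1$, $H_2$ and $H_1\cap H_2$, writing $a_i=\mathrm{wt}_{U^{\perp'}}(P_i)$ and $b=\mathrm{wt}_{U^{\perp'}}(L)$. This gives
\[
\dim_{\F_q}\bigl((U\cap H_1)+(U\cap H_2)\bigr)=(n-m+a_1)+(n-m+a_2)-(n-2m+b)=n+a_1+a_2-b .
\]
Since $a_1+a_2\le b$ always holds, $U$ is $2$-spannable if and only if some line $L=P_1+P_2$ is such that $U^{\perp'}\cap L$ has complementary weights with respect to $P_1,P_2$.

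\emph{Sufficient condition for such a line.} Any line $L$ with $b=\mathrm{wt}_{U^{\perp'}}(L)\le 2$ works:
\begin{itemize}
\item if $b\le 1$, choose $P_1$ to be the point containing $U^{\perp'}\cap L$;
\item if $b=2$ and $U^{\perp'}\cap L=\langle u,v\rangle_{\F_q}$ with $u,v$ in distinct points, take $P_1=\langle u\rangle_{\Fqm}$ and $P_2=\langle v\rangle_{\Fqm}$;
\item if $b=2$ and $u,v$ lie in the same point, take $P_1$ to be that point and $P_2$ any other point of $L$.
\end{itemize}
So it suffices to prove: if $n\ge 2m-2$, then some $2$-dimensional $\Fqm$-subspace meets $U^{\perp'}$ in $\F_q$-dimension at most $2$. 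Equivalently, $U^{\perp'}$ is not $(2,2)$-evasive-violating everywhere.

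\emph{The case $k=2$.} Here $L$ is the whole space and $\dim U^{\perp'}=2m-n\le 2$, so the condition holds immediately.

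\emph{General $k$: the main obstacle.} Now $\dim U^{\perp'}\le m(k-2)+2$, which is exactly the threshold at which a generic line meets $U^{\perp'}$ in dimension $\dim U^{\perp'}-m(k-2)\le 2$. I would make "generic" rigorous in one of two ways:
\begin{itemize}
\item \emph{Averaging.} Count pairs (line $L$, nonzero $\F_q$-subspace of $U^{\perp'}\cap L$ of dimension $3$) and show the total is less than the number of lines.
\item \emph{Induction on $k$.} Choose a point $Q$ with $\mathrm{wt}_{U^{\perp'}}(Q)$ small and project from $Q$ to $\Fqm^{k-1}$, so that the dimension inequality is preserved.
\end{itemize}
This step is where the real work lies, and I expect the induction, using only counting of weights of points, to be the most manageable route.
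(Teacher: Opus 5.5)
Your lower bound is correct and complete, and so is the dual reformulation of the upper bound. The identity $\dim_{\F_q}((U\cap H_1)+(U\cap H_2))=n+a_1+a_2-b$ is exactly Proposition~\ref{prop:nocompl}, a line of weight at most $2$ in $U^{\perp'}$ does give $2$-spannability, and the case $k=2$ is fine.

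The gap is the step you postpone. For $k\ge 3$ you never show that some line $L$ has $\mathrm{wt}_{U^{\perp'}}(L)\le 2$, and neither of your two routes works as formulated.

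\emph{The count of $3$-dimensional subspaces.} The number of pairs $(L,T)$, with $T$ a $3$-dimensional subspace of $U^{\perp'}\cap L$, need not be smaller than the number of lines, because one heavy line can dominate. Take $k=3$, $m\ge 4$, and let $U^{\perp'}$ be an $(m+2)$-dimensional $\F_q$-subspace of a single line $L_0$ containing no $1$-dimensional $\F_{q^m}$-subspace. This is the dual of a genuine nondegenerate $[2m-2,3]_{q^m/q}$ system. Then $L_0$ alone contributes $\binom{m+2}{3}_q\ge q^{3m-3}>q^{2m}+q^m+1$ pairs.

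\emph{The projection.} Projecting $U^{\perp'}$ from $Q$ lowers its dimension only by $\mathrm{wt}_{U^{\perp'}}(Q)\le m-1$, since nondegeneracy of $U$ means $U^{\perp'}$ contains no point. The threshold, however, drops by $m$. So in the critical case $n=2m-2$, the hypothesis $\dim\le m(k-3)+2$ in the quotient would force $\mathrm{wt}_{U^{\perp'}}(Q)\ge m$, and choosing $Q$ of small weight makes this worse. Moreover, lines of $\F_{q^m}^k/Q$ lift to planes through $Q$, not to lines.

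What you need is a first-moment count over vectors rather than over $3$-dimensional subspaces. This is the same double counting as in the proof of Theorem~\ref{prop:UCperpscattgen}; the paper itself only quotes the statement from \cite{bartoli2025linear}. Put $N=mk-n\le m(k-2)+2$ and $w_L=\mathrm{wt}_{U^{\perp'}}(L)$. Counting pairs $(v,L)$ with $0\ne v\in U^{\perp'}\cap L$ gives
\[\sum_{L}\bigl(q^{w_L}-1\bigr)=(q^N-1)\,\frac{q^{m(k-1)}-1}{q^m-1}.\]
Dividing by the number $\frac{(q^{mk}-1)(q^{m(k-1)}-1)}{(q^{2m}-1)(q^m-1)}$ of lines, the average of $q^{w_L}-1$ equals $(q^N-1)\frac{q^{2m}-1}{q^{mk}-1}$. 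This is $<q^{N-m(k-2)}\le q^2$; the strict inequality reduces to $q^{N+2m-mk}<q^N+q^{2m}-1$, which holds because $k\ge 2$. Hence some line has $w_L\le 2$, and your argument closes for every $k\ge 2$.

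If you prefer induction, intersect $U^{\perp'}$ with a hyperplane $H$ of minimum weight $\max\{N-m,0\}$. Such an $H$ exists by the analogous count over hyperplanes; dually, this projects $U$, not $U^{\perp'}$, from a point. Then $\dim_{\F_q}(U^{\perp'}\cap H)\le m(k-3)+2$, and every line of $H$ is a line of $\F_{q^m}^k$.
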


Also, they showed constructions of rank-metric intersecting codes for a large sets of the parameters.

\begin{theorem}[Theorem 4.4 in \cite{bartoli2025linear}]\label{thm:existence-old}
    There exists a nondegenerate $[n,k,d]_{q^m/q}$ rank-metric intersecting code if 
    \[ 2k-1 \leq n \leq 2m-2k+1. \]
\end{theorem}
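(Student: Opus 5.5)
The plan is to use the elementary sufficient condition $2d>n$. If $\C$ is an $[n,k,d]_{q^m/q}$ code with $2d\geq n+1$, then for any nonzero $c,c'\in\C$ the supports $\mathrm{supp}(c),\mathrm{supp}(c')\subseteq\fq^n$ both have dimension at least $d$. Hence
\[ \dim(\mathrm{supp}(c)\cap\mathrm{supp}(c'))\geq 2d-n\geq 1, \]
so $\C$ is rank-metric intersecting. It therefore suffices to exhibit, for each $n$ in the range $2k-1\le n\le 2m-2k+1$, a nondegenerate $[n,k,d]_{q^m/q}$ code with $d\ge (n+1)/2$. Equivalently, by Theorem~\ref{th:connection}, we need an $[n,k]_{q^m/q}$ system $U$ with $\dim_{\fq}(U\cap H)\le \lfloor (n-1)/2\rfloor$ for every $\fqm$-hyperplane $H$. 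In the language of the preliminaries, $U$ must be $(k-1,\lfloor (n-1)/2\rfloor)$-evasive and span $\Fqmk$.

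First I treat $2k-1\le n\le m$. Here I take a Gabidulin code: choose $\fq$-linearly independent $g_1,\dots,g_n\in\Fqm$ and the $k\times n$ matrix with rows $(g_j^{q^i})_j$ for $i=0,\dots,k-1$. This code is MRD and nondegenerate, with $d=n-k+1$. The condition $2(n-k+1)\ge n+1$ is exactly $n\ge 2k-1$, so this case is done.

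Next I treat $m<n\le 2m-2k+1$, writing $n=m+t$ with $1\le t\le m-2k+1$. I would build $U$ as a direct sum $U=U_1\oplus U_2$ inside $\Fqmk$. Here $U_1=\{(x,x^q,\dots,x^{q^{k-1}}):x\in\Fqm\}$ is the Gabidulin system of dimension $m$. The subspace $U_2=\{(\lambda y,\lambda y^q,\dots,\lambda y^{q^{k-1}}): y\in T\}$ is a twisted copy, where $T$ is a $t$-dimensional $\fq$-subspace of $\Fqm$ and $\lambda\in\Fqm$ is chosen so that the sum is direct.

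Every hyperplane $H$ corresponds to a nonzero $q$-polynomial of $q$-degree at most $k-1$. So $\dim(H\cap U_1)\le k-1$ and $\dim(H\cap U_2)\le k-1$. A naive bound therefore gives
\[ \dim(H\cap U)\le \dim(H\cap U_1)+\dim(H\cap U_2)+(\text{extra}). \]
The extra term comes from vectors $u_1+u_2\in H$ with neither summand in $H$. Such vectors correspond to solutions of $f(x)=-f(\lambda y)$-type equations, with $f$ the linearized polynomial attached to $H$. I would bound this term by viewing $\{(x,y):f(x)+\lambda' f(y)=0\}$ as the kernel of an $\fq$-linear map from $\Fqm\times T$ to $\Fqm$. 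Its dimension is at most $(m+t)-\mathrm{rk}$, and I would aim to show the rank is at least $m-k+1$. This would give $\dim(H\cap U)\le t+k-1$, which is at most $(n-1)/2$ precisely when $t\le m-2k+1$. The matching of this inequality with the stated upper end $n\le 2m-2k+1$ strongly suggests this is the intended count.

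The main obstacle is precisely this rank estimate: showing that for every hyperplane $H$ the $\fq$-image of $U$ in $\Fqmk/H\cong\Fqm$ has dimension at least $m-k+1$. If the explicit twist is hard to control, the fallback is to choose $T$ and $\lambda$ generically and prove, by counting hyperplanes against bad pairs $(T,\lambda)$, that some choice achieves the bound. Failing that, I would invoke Theorem~\ref{thm:boundevasive}-type evasive constructions with parameters $h=k-1$ and $r=\lfloor (n-1)/2\rfloor$. Nondegeneracy is automatic once $\langle U\rangle_{\Fqm}=\Fqmk$, which already holds for $U_1$.
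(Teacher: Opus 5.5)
Your reduction to $2d\ge n+1$ is correct: two subspaces of $\F_q^n$ of dimension at least $d$ meet in dimension at least $2d-n$. The case $2k-1\le n\le m$ via Gabidulin codes is also complete. The gap is in the case $n=m+t>m$. There the estimate $\dim_{\F_q}(H\cap U)\le t+k-1$, which you call the main obstacle, is never established. Your fallbacks do not establish it either. The generic/counting choice of $(T,\lambda)$ is only announced, not argued. Theorem~\ref{thm:boundevasive} is an \emph{upper bound} on the dimension of evasive subspaces, not an existence result, so it cannot be invoked to produce $U$. (The paper itself gives no proof of this statement; it only quotes \cite{bartoli2025linear}.)

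The missing step is a one-line observation, and it makes the twist irrelevant. Since $U_1\subseteq U$ and $\dim_{\F_q}(U/U_1)=t$, the natural map $H\cap U\to U/U_1$ has kernel $H\cap U_1$. Hence
\[
\dim_{\F_q}(H\cap U)\le \dim_{\F_q}(H\cap U_1)+t\le (k-1)+t
\]
for every $\F_{q^m}$-hyperplane $H$. In your own formulation: the rank of $(x,y)\mapsto f(x)+\lambda f(y)$ on $\F_{q^m}\times T$ is at least the rank of its restriction to $\F_{q^m}\times\{0\}$. That restriction has rank $m-\dim_{\F_q}\ker f\ge m-k+1$.

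So $U_2$ may be \emph{any} $t$-dimensional $\F_q$-subspace with $U_1\cap U_2=\{0\}$. Such a subspace exists because $m+t<2m\le mk$. This needs $k\ge 2$, which is implicit in the statement, since for $k=1$ no nondegenerate code has $n>m$. In code terms you are just taking $G=[G_1\mid G_2]$ with $G_1$ generating an $[m,k,m-k+1]_{q^m/q}$ Gabidulin code, so $w(xG)\ge w(xG_1)\ge m-k+1$. Thus $d\ge m-k+1$, and $2(m-k+1)\ge n+1$ is exactly $n\le 2m-2k+1$.

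With this line inserted, your argument is a complete proof. The twisted-Gabidulin choice of $U_2$ and the ``extra term'' analysis can then be dropped.
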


Clearly, the natural question that the authors left open (see \cite[Question 4.11]{bartoli2025linear}) regards the existence of these codes if the length is between $2m-2k+2$ and $2m-3$, for $k\geq 3$.
We will answer to this question in Section \ref{sec:existence}, after we deal with some characterization results.

\begin{theorem}[Theorem 4.2 in \cite{bartoli2025linear}]\label{thm:4.2}
Let $\C$ be a nondegenerate $\nkdm$ rank-metric intersecting code. Then
$$k \leq d \leq m.$$
\end{theorem}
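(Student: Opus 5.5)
The upper bound is immediate, and the lower bound follows from the geometric characterization in Theorem~\ref{thm:not2spannable}: a code of minimum distance at most $k-1$ has a $2$-spannable system.

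\emph{Upper bound.} Every codeword $c\in\C$ satisfies $w(c)=\dim_{\fq}(\mathrm{supp}(c))=\mathrm{rk}(\Gamma(c))$. Since $\Gamma(c)$ is an $m\times n$ matrix over $\fq$, this rank is at most $m$. Hence $d\le m$, and this part does not use the intersecting property at all.

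\emph{Lower bound.} Let $U=U_\C$ be the associated $[n,k,d]_{q^m/q}$ system, and suppose for contradiction that $d\le k-1$.

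1. By \eqref{eq:distancedesign} there is an $\Fqm$-hyperplane $H_1$ with $\dim_{\fq}(U\cap H_1)=n-d\ge n-k+1$.

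2. Choose an $\fq$-complement $S$ of $U\cap H_1$ inside $U$. Then $U=(U\cap H_1)\oplus S$ and $\dim_{\fq}(S)=d\le k-1$.

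3. The $\Fqm$-span $\langle S\rangle_{\Fqm}$ is spanned by at most $k-1$ vectors, so $\dim_{\Fqm}\langle S\rangle_{\Fqm}\le k-1<k$. Hence it lies in some $\Fqm$-hyperplane $H_2$. Since $S\subseteq U$, this gives $S\subseteq U\cap H_2$.

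4. Therefore $U=(U\cap H_1)+S\subseteq (U\cap H_1)+(U\cap H_2)\subseteq U$, so equality holds throughout.

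5. Thus $U$ is $2$-spannable. By Theorem~\ref{thm:not2spannable}, $\C$ is not rank-metric intersecting, a contradiction. Hence $d\ge k$.

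There is no real obstacle here. The only point to check is step 3, that a set of at most $k-1$ vectors always lies in an $\Fqm$-hyperplane of $\Fqmk$, which is clear. Nondegeneracy is used only so that the system $U_\C$ exists and Theorem~\ref{thm:not2spannable} applies.
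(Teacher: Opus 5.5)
Your proof is correct. This paper does not prove the statement itself; it quotes it from \cite{bartoli2025linear}, so there is no proof here to check yours against. Your upper bound $d\le m$ is just $\mathrm{rk}\,\Gamma(c)\le m$. Your argument for $d\ge k$ is clean. The only implicit point is that $H_1\neq H_2$. This holds automatically, because $\langle U\rangle_{\F_{q^m}}=\F_{q^m}^k$ means $U\not\subseteq H_1$. In any case, the definition of $2$-spannable used here does not require the two hyperplanes to be distinct.

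An equivalent route avoids Theorem~\ref{thm:not2spannable} and works with codewords directly.
\begin{itemize}
\item Take $c\in\C$ with $w(c)=d$, and an $(n-d)$-dimensional $\F_q$-subspace $V\subseteq\F_q^n$ with $V\cap\mathrm{supp}(c)=\{0\}$.
\item The condition $\mathrm{supp}(c')\subseteq V$ is equivalent to $c'\,w^{\top}=0$ for $w$ in an $\F_q$-basis of $V^{\perp}$. That is $d$ $\F_{q^m}$-linear conditions on $c'$.
\item So if $d<k$ there is a nonzero $c'\in\C$ with $\mathrm{supp}(c)\cap\mathrm{supp}(c')=\{0\}$.
\end{itemize}
Under the identification $\mathrm{supp}(xG)^{\perp}\cong U\cap x^{\perp}$, which underlies Theorem~\ref{thm:not2spannable}, this is exactly the dual of your choice of $S$ and $H_2$.

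Your version stays in the geometric language of the paper. The codeword version is self-contained and does not use nondegeneracy. It is also the direct analogue of the classical Hamming-metric proof that intersecting codes satisfy $d\ge k$.
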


Let us remark that rank-metric intersecting codes are also studied in the very recent preprint \cite{conca2026intersecting}, where the authors relate them to the  concept of vertical
connectivity for $q$-matroids.

\bigskip

\section{Some characterization results}\label{sec:characterizations}

In this section we investigate structural constraints on rank-metric intersecting codes by analyzing the geometry of the $q$-system associated with the code and its dual. 

We first provide a geometric characterization of subspaces that are not $2$-spannable in terms of complementary weights of $\fq$-subspaces obtained by intersecting the dual of the $q$-system with $2$-dimensional $\F_{q^m}$-subspaces. This characterization will allow us to connect rank-metric intersecting codes with evasive subspaces. Using this perspective, we prove that the dual $q$-system associated with a rank-metric intersecting code must satisfy strong evasiveness properties. 
These constraints lead to new bounds on the parameters of such codes and, in particular, to restrictions on the possible length and dimension. 
As an application, we obtain a characterization of rank-metric intersecting codes of length $2m-3$ in terms of scattered subspaces.

We start with the following characterization.

\begin{proposition}\label{prop:nocompl}
Let $U$ be an $\fq$-subspace of $\Fqmk$ of dimension $n$.  The subspace $U$ is $2$-spannable if and only if there exists a $2$-dimensional $\Fqm$-subspace $\ell$ of $\Fqmk$ such that $\ell \cap U^\perp$ is a $q$-system with complementary weights.
\end{proposition}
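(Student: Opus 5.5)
The plan is to translate everything through the duality of Proposition~\ref{prop:dualityproperties}. Here $U^\perp$ denotes $U^{\perp'}$, and ``$\ell\cap U^\perp$ has complementary weights'' means that $\ell=P_1+P_2$ for two distinct $1$-dimensional $\Fqm$-subspaces $P_1,P_2$ with $\mathrm{wt}_{U^\perp}(\ell)=\mathrm{wt}_{U^\perp}(P_1)+\mathrm{wt}_{U^\perp}(P_2)$. The dictionary between the two sides is: distinct points $P_1,P_2$ correspond to distinct hyperplanes $H_i=P_i^\perp$, and $\ell=P_1+P_2$ corresponds to $\ell^\perp=H_1\cap H_2$, which has $\Fqm$-dimension $k-2$. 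This uses (i) and (iv) of Proposition~\ref{prop:dualityproperties}.

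\textbf{Step 1: the spannability defect.} For two hyperplanes $H_1,H_2$ we have $(H_1\cap U)+(H_2\cap U)\subseteq U$. Its dimension is
\[
\mathrm{wt}_U(H_1)+\mathrm{wt}_U(H_2)-\mathrm{wt}_U(H_1\cap H_2),
\]
since $(H_1\cap U)\cap(H_2\cap U)=H_1\cap H_2\cap U$. Hence $U$ is $2$-spannable via $H_1,H_2$ if and only if the defect
\[
n-\mathrm{wt}_U(H_1)-\mathrm{wt}_U(H_2)+\mathrm{wt}_U(H_1\cap H_2)
\]
equals $0$. In that case $H_1\neq H_2$ automatically: if $H_1=H_2$, then $U\subseteq H_1$, which contradicts the standing assumption that $U$ is a system, so $\langle U\rangle_{\Fqm}=\Fqmk$. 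This is the only point where that assumption is needed; if it is not assumed, one should note it.

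\textbf{Step 2: apply \eqref{eq:dualweight} to each term.} Using $\dim_{\fq}(U)=n$:
\begin{itemize}
\item for $W=H_i$ ($\Fqm$-dimension $k-1$, with $H_i^\perp=P_i$): $\mathrm{wt}_{U^\perp}(P_i)=\mathrm{wt}_U(H_i)+m-n$;
\item for $W=H_1\cap H_2$ ($\Fqm$-dimension $k-2$, with $W^\perp=\ell$): $\mathrm{wt}_{U^\perp}(\ell)=\mathrm{wt}_U(H_1\cap H_2)+2m-n$.
\end{itemize}
Subtracting gives
\[
\mathrm{wt}_{U^\perp}(\ell)-\mathrm{wt}_{U^\perp}(P_1)-\mathrm{wt}_{U^\perp}(P_2)=n-\mathrm{wt}_U(H_1)-\mathrm{wt}_U(H_2)+\mathrm{wt}_U(H_1\cap H_2).
\]
So the complementary-weights defect of $\ell\cap U^\perp$ with respect to $P_1,P_2$ equals the spannability defect of $U$ with respect to $H_1,H_2$.

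\textbf{Step 3: conclude both directions.} If $U$ is $2$-spannable via $H_1,H_2$, set $P_i=H_i^\perp$ and $\ell=P_1+P_2$. The $P_i$ are distinct because the $H_i$ are, and the identity of Step 2 gives complementary weights. Conversely, given $\ell=P_1+P_2$ with complementary weights, set $H_i=P_i^\perp$. These are distinct hyperplanes, $H_1\cap H_2=\ell^\perp$, and the same identity shows that $U$ is $2$-spannable.

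I expect no real obstacle beyond bookkeeping. The points needing care are the correct $\Fqm$-dimensions inserted into \eqref{eq:dualweight}, and the distinctness of the $H_i$ and $P_i$.
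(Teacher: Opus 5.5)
Your proof is correct and follows the paper's route: both apply \eqref{eq:dualweight} to $H_1$, $H_2$ and $H_1\cap H_2$, which turns the condition $U=(H_1\cap U)+(H_2\cap U)$ into complementary weights of $\ell=H_1^\perp+H_2^\perp$. Your defect identity is the computation the paper does on the negated, strict-inequality form, and you also spell out what the paper leaves implicit: the hyperplanes are distinct because $\langle U\rangle_{\Fqm}=\Fqmk$, and pairs of distinct hyperplanes correspond to pairs of distinct points spanning a line.
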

\begin{proof}
Clearly, $U$ is not $2$-spannable if and only if for any pairs of distinct $\Fqm$-hyperplanes $H_1$ and $H_2$ of $\Fqmk$ we have
$$ (H_1\cap U)+(H_2\cap U) \subset U.$$
Hence, 
$$\wt_U(H_1)+\wt_U(H_2)-\wt_U(H_1\cap H_2)<n.$$
From Equation \eqref{eq:dualweight}, the above inequality reads as follows
$$ \wt_{U^{\perp'}}(H_1^\perp+H_2^\perp)>\wt_{U^{\perp'}}(H_1^\perp)+\wt_{U^{\perp'}}(H_2^\perp),$$
 i.e. the intersection of $U^{\perp'}$ with any $2$-dimensional $\Fqm$-subspace of $\Fqm^k$ is not complementary weight.
\end{proof}

Using the characterization of non-$2$-spannable subspaces obtained in
Proposition~\ref{prop:nocompl}, we derive a strong restriction on the
dual subspace associated with a rank-metric intersecting code.
In particular, it must have an evasive property with respect to the one-dimensional $\fqm$-subspaces of $\fqm^k$.

\begin{theorem}\label{prop:UCperpscattgen}
    Let $\C$ be a nondegenerate $[n,k,d]_{q^m/q}$ rank-metric intersecting code with $k\geq 3$.
    Then $U_{\C}^{\perp'}$ is a $(1,2m-n-2)$-evasive $\fq$-subspace in $\fqm^k$.
\end{theorem}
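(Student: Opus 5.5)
We prove the statement in its dual form. For an $\Fqm$-point $P=\langle v\rangle_{\Fqm}$ of $\Fqmk$, set $H=P^\perp$, a hyperplane. Equation \eqref{eq:dualweight} of Proposition~\ref{prop:dualityproperties}, applied to $U=U_\C$ (with $\dim_{\fq}U=n$) and the $\Fqm$-subspace $H$ of dimension $k-1$, gives
\[
\wt_{U^{\perp'}}(P)=\wt_U(H)+m-n .
\]
So the claim $\wt_{U^{\perp'}}(P)\le 2m-n-2$ for every $P$ is equivalent to
\[
\wt_U(H)\le m-2 \quad\text{for every $\Fqm$-hyperplane } H.
\]
In code terms this says $d\ge n-m+2$. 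We argue by contradiction: assume some hyperplane $H_1=\ker h$ satisfies $w:=\wt_U(H_1)\ge m-1$. We will build a second hyperplane $H_2\neq H_1$ with $U=(U\cap H_1)+(U\cap H_2)$. This makes $U$ $2$-spannable, contradicting Theorem~\ref{thm:not2spannable}; equivalently, it contradicts the non-complementarity in Proposition~\ref{prop:nocompl}.

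\textbf{Setting up.} Fix an $\fq$-complement $S$ of $U\cap H_1$ in $U$. Then $s:=\dim_{\fq}S=n-w\le n-m+1\le m-2$, using $n\le 2m-3$ from Theorem~\ref{thm:boundsn}. Also $S\neq 0$, since otherwise $U\subseteq H_1$, contradicting nondegeneracy. Look for $H_2=\ker g$ with $g$ an $\Fqm$-linear functional not proportional to $h$; any such $g$ automatically gives $H_2\ne H_1$. The identity $U=(U\cap H_1)+(U\cap H_2)$ holds as soon as, for every $x\in S$, there is $u\in U\cap H_1$ with $g(x-u)=0$, because then $x-u\in U\cap H_2$. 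Hence it suffices to find $g$ with
\[
g(S)\subseteq g(U\cap H_1). \qquad (\ast)
\]

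\textbf{Choosing $g$.} Here the hypothesis $k\ge 3$ enters. The functionals modulo $\langle h\rangle$ form an $\Fqm$-space of dimension $k-1\ge 2$, and restricting to $H_1$ identifies them with all functionals on $H_1\cong\Fqm^{k-1}$.
\begin{itemize}
\item If $w\ge m$, choose $g$ so that $g|_{U\cap H_1}$ is onto $\Fqm$. Since $U\cap H_1$ is an $\fq$-subspace of $H_1$ of dimension at least $m$, a counting argument shows that a generic functional on $H_1$ has image of full $\fq$-dimension $m$ on it; this uses $k-1\ge 2$ to avoid degenerate configurations. Then $(\ast)$ is trivial.
\item If $w=m-1$, we need $g(U\cap H_1)$ to be an $\fq$-hyperplane of $\Fqm$ containing $g(S)$. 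Here $s\le m-2$ leaves room. The plan is to first fix the values of $g$ on $\langle S\rangle_{\Fqm}$, or modify $S$ within $U$, and then use the remaining freedom of $g$ on $H_1$ to force $g(U\cap H_1)$ to contain $g(S)$. The free choice comes from the at least $2$-dimensional space of functionals modulo $h$; we count functionals for which $(\ast)$ fails and show they cannot exhaust all $g$.
\end{itemize}

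\textbf{Main obstacle.} The delicate step is the second bullet, the boundary case $w=m-1$. A naive choice forcing $g(S)=0$ need not exist, since $\langle S\rangle_{\Fqm}$ may be the whole space once $s\ge k$. We would instead count pairs $(g,x)$ with $g(x)\notin g(U\cap H_1)$ to show that some admissible $g$ satisfies $(\ast)$, relying on $s\le m-2$ and $k\ge 3$. If that count fails, the fallback is to apply Proposition~\ref{prop:nocompl} directly. Take a point $Q\ne P$ with $\wt_{U^{\perp'}}(Q)\ge 1$; it exists because $\dim U^{\perp'}=mk-n>m$. On the line $\ell=P+Q$ non-complementarity gives
\[
\wt_{U^{\perp'}}(P)\le \wt_{U^{\perp'}}(\ell)-2,
\]
and we would then choose $Q$ so that $\wt_U(\ell^\perp)=0$, which by \eqref{eq:dualweight} means $\wt_{U^{\perp'}}(\ell)=2m-n$.
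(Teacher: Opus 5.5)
Your dual reformulation is correct. By \eqref{eq:dualweight}, $\wt_{U^{\perp'}}(P)=\wt_U(P^\perp)+m-n$, so the statement is equivalent to $\wt_U(H)\le m-2$ for every hyperplane $H$. Your reduction of $2$-spannability to condition $(\ast)$ is also sound.

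\textbf{The gap.} The case $w=m-1$ is never proved, and it is exactly the boundary value $\wt_{U^{\perp'}}(P)=2m-n-1$ that the theorem must exclude. You describe a count you would do and a fallback you would try, but carry out neither. Both depend on an existence statement you do not justify:
\begin{itemize}
\item the fallback needs a line $\ell\ni P$ with $\wt_U(\ell^\perp)=0$, i.e.\ an $\F_{q^m}$-subspace $K\subseteq H_1$ of dimension $k-2$ with $K\cap U=\{0\}$;
\item the primal route needs a functional on $H_1$ injective on $U\cap H_1$, which is the same thing.
\end{itemize}
Also, the order you suggest for the primal route (fix $g$ on $\langle S\rangle_{\F_{q^m}}$ first, then use the freedom on $H_1$) is backwards. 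As you note yourself, $\langle S\rangle_{\F_{q^m}}$ may be the whole space, leaving no freedom. The paper avoids all this with a global count. If $\wt_{U^{\perp'}}(P)=h\ge 2m-n-1$, non-complementarity forces each of the $\frac{q^{m(k-1)}-1}{q^m-1}$ lines through $P$ to have weight at least $h+2$. Counting the vectors of $U^{\perp'}$ on these lines then exceeds $q^{mk-n}$.

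\textbf{How to close it.} The missing lemma is a short count. Each nonzero vector of $U\cap H_1$ lies in $\frac{q^{m(k-2)}-1}{q^m-1}$ hyperplanes of $H_1$. A hyperplane of $H_1$ meeting $U\cap H_1$ nontrivially contains at least $q-1$ such vectors. Hence at most $\frac{(q^{w}-1)(q^{m(k-2)}-1)}{(q-1)(q^m-1)}$ hyperplanes of $H_1$ are bad, which is fewer than the total $\frac{q^{m(k-1)}-1}{q^m-1}$ whenever $w\le m$.

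Given such $K$, put $\ell=K^\perp$, which contains $P$. Then \eqref{eq:dualweight} gives $\wt_{U^{\perp'}}(\ell)=2m-n$. For any line through $P$, apply non-complementarity twice: first with an arbitrary $Q\neq P$ to get a vector of $U^{\perp'}\cap\ell$ outside $P$, then with the point it spans. This gives $\wt_{U^{\perp'}}(\ell)\ge \wt_{U^{\perp'}}(P)+2$, so no preselected $Q$ of positive weight is needed. Hence $w+m-n\le 2m-n-2$, which settles $w\in\{m-1,m\}$ at once.

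Alternatively, in your primal set-up:
\begin{itemize}
\item Take $g|_{H_1}$ injective on $U\cap H_1$. Then $A=g(U\cap H_1)$ is an $\F_q$-hyperplane, so $A=\ker \mathrm{Tr}_{q^m/q}(\beta\,\cdot)$ for some $\beta\neq0$.
\item Fix $e$ with $h(e)=1$ and choose only $\lambda=g(e)$. Condition $(\ast)$ then prescribes an $\F_q$-linear functional on the $\F_q$-space $h(S)\subseteq\F_{q^m}$, using that $h|_S$ is injective.
\item Some $\lambda$ realizes it, because $\lambda\mapsto \mathrm{Tr}_{q^m/q}(\beta\lambda\,\cdot)|_{h(S)}$ is onto the $\F_q$-dual of $h(S)$.
\end{itemize}
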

\begin{proof}
    Let $n=2m-r$, we want to show that $U_{\C}^{\perp'}$ is a $(1,r-2)$-evasive $\fq$-subspace.
    Observe that $U_{\C}^{\perp'}$ is an $\fq$-subspace of $\Fqmk$ with dimension $km-2m+r$.
    By contradiction suppose that $U_{\C}^{\perp'}$ is not a $(1,r-2)$-evasive subspace. Then there exists $P=\langle v \rangle_{\Fqm}$ with $v \in \Fqmk\setminus \{0\}$ such that $w_{U_{\C}^{\perp'}}(P)=h\geq r-1$.
    Every $2$-dimensional $\Fqm$-subspace of $\Fqmk$ through $P$ has weight at least $h+2$, otherwise the intersection between such a $2$-dimensional $\Fqm$-subspace and ${U_{\C}^{\perp'}}$ is a a $q$-system with complementary weights, a contradiction to Proposition \ref{prop:nocompl}.
    This means that the number of nonzero vectors in ${U_{\C}^{\perp'}}$ is at least the number of vectors in 
    \[ \bigcup_{W \in \mathcal{L}}((W \cap {U_{\C}^{\perp'}})\setminus P) \,\, \cup (P \cap U_{\C}^{\perp'}), \]
    where $\mathcal{L}$ is the set of all the $2$-dimensional $\Fqm$-subspaces of $\Fqmk$ through $P$.
    Therefore, we get
    \[ q^{(k-2)m+r}-1 \geq \frac{q^{m(k-1)}-1}{q^m-1}(q^{h+2}-q^h)+q^h-1,  \]
    from which we derive
    \[q^{m(k-2)+r-h}\geq 1 +\frac{q^{m(k-1)}-1}{q^m-1}(q^2-1),\]
    and so
    \begin{equation}\label{eq:condevasiver-2} 
    q^{m(k-1)}-q^{m(k-1)+r-h-2}+q^{m(k-2)+r-h-2}-q^{m(k-1)-2}+q^{m-2}-1 \leq 0. 
    \end{equation}
    Observing that
    \[ q^{m(k-1)}-q^{m(k-1)+r-h-2}+q^{m(k-2)+r-h-2}-q^{m(k-1)-2} > 0, \]
    since
    \[ q^{m(k-2)-2}(q^{m+2}-q^{m+r-h}+q^{r-h}-q^m)> 0, \]
    as $r-h < 2$, and $q^{m-2}-1\geq 0$, \eqref{eq:condevasiver-2} gives a contradiction.
\end{proof}

We now exploit the evasiveness property of the dual subspace to derive
constraints on the parameters of a rank-metric intersecting code.
First, we derive a bound on the minimum distance on a rank-metric intersecting code, extending Theorem \ref{thm:4.2}.

\begin{corollary}\label{cor:bounddistance}
    Let $\C$ be a nondegenerate $[n,k,d]_{q^m/q}$ rank-metric intersecting code with $k\geq 3$.
    Then
    \[ \max\{k,n-m+2\}\leq d \leq m. \]
\end{corollary}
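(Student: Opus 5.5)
The bounds $k\le d\le m$ are already given by Theorem~\ref{thm:4.2}. It therefore remains to prove $d\ge n-m+2$, and for this I will combine the evasiveness from Theorem~\ref{prop:UCperpscattgen} with the duality formula~\eqref{eq:dualweight}.

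By Theorem~\ref{th:connection}, $d=n-\max_H \wt_{U_\C}(H)$, where $H$ ranges over the $\Fqm$-hyperplanes of $\Fqmk$. It is therefore enough to show that $\wt_{U_\C}(H)\le m-2$ for every hyperplane $H$. Fix such an $H$ and set $P=H^\perp$, a $1$-dimensional $\Fqm$-subspace. Apply Proposition~\ref{prop:dualityproperties}(v) with $U=U_\C$ (so $t=n$) and $W=H$ (so $s=k-1$). This gives
\[
\wt_{U_\C^{\perp'}}(P)-\wt_{U_\C}(H)=mk-n-m(k-1)=m-n,
\]
that is,
\[
\wt_{U_\C}(H)=\wt_{U_\C^{\perp'}}(P)+n-m.
\]

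Theorem~\ref{prop:UCperpscattgen} applies because $k\ge 3$, so $U_\C^{\perp'}$ is $(1,2m-n-2)$-evasive. Hence $\wt_{U_\C^{\perp'}}(P)\le 2m-n-2$. Substituting into the identity above gives
\[
\wt_{U_\C}(H)\le (2m-n-2)+n-m=m-2,
\]
and therefore $d\ge n-(m-2)=n-m+2$.

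The only delicate point is bookkeeping: one must apply~\eqref{eq:dualweight} with the correct roles, using $H$ of dimension $k-1$ so that the dual subspace is a point, and keep the signs straight. Once the identity is set up this way, the argument is a direct substitution.
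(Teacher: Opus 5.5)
Your proof is correct and follows the paper's argument: you combine the $(1,2m-n-2)$-evasiveness of $U_\C^{\perp'}$ from Theorem~\ref{prop:UCperpscattgen} with the duality identity~\eqref{eq:dualweight} to get $\wt_{U_\C}(H)\le m-2$ for every hyperplane $H$, and then take $k\le d\le m$ from Theorem~\ref{thm:4.2}. Your bookkeeping is also cleaner than the printed proof, which has sign typos: it writes $m+2$ and $n-m-2$ where $m-2$ and $n-m+2$ are meant.
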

\begin{proof}
   Let $n=2m-r$, then by Theorem \ref{prop:UCperpscattgen}, we know that $\mathrm{wt}_{U_{\C}^{\perp'}}(\langle v \rangle_{\fqm})=\dim_{\fq}(U_{\C}^{\perp'} \cap \langle v \rangle_{\fqm})\leq r-2$ for any $v \in \fqm^k\setminus\{0\}$.
    By applying Proposition \ref{prop:dualityproperties}, we derive that
    \[ \mathrm{wt}_{U_{\C}}(\langle v \rangle_{\fqm}^\perp)=\mathrm{wt}_{U_{\C}^{\perp'}}(\langle v \rangle_{\fqm})-mk+2m-r+m(k-1)=\mathrm{wt}_{U_{\C}^{\perp'}}(\langle v \rangle_{\fqm})+m-r\leq m+2. \]
    Therefore, by Theorem \ref{th:connection}, we have that
    \[d\geq 2m-r-(m+2)=n-m-2.\]
    The statement now follows by Theorem \ref{thm:4.2}.
\end{proof}

As a consequence of Theorem~\ref{prop:UCperpscattgen} and the bound
for evasive subspaces, we obtain the following bound on the parameters of $\C$.

\begin{theorem}\label{thm:k=3andm6}
    Let $\C$ be a nondegenerate $[n,k,d]_{q^m/q}$ rank-metric intersecting code with $k\geq 3$ and let $n=2m-r$. Then $k\leq 2(r-1)-\frac{r^2-r}m$ and in particular $n\leq 2m - \left\lfloor \frac{k+4}{2} \right\rfloor$.
\end{theorem}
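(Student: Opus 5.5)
The plan is to combine the evasiveness of the dual system given by Theorem~\ref{prop:UCperpscattgen} with the dimension bound for $(1,r)$-evasive subspaces in Theorem~\ref{thm:boundevasive}, applied with $h=1$.

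Write $n=2m-r$. By Theorem~\ref{prop:UCperpscattgen}, $U_{\C}^{\perp'}$ is a $(1,r-2)$-evasive $\fq$-subspace of $\fqm^k$. By Proposition~\ref{prop:dualityproperties}(ii), its dimension is $km-n=km-2m+r$.

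Next I check that $k<m$, so that Theorem~\ref{thm:boundevasive} applies. Theorem~\ref{thm:boundsn} gives $2k-1\le n\le 2m-3$, hence $2k\le 2m-2$ and $k\le m-1$. I also need $r-2\ge 1$. If $r-2=0$, the dual would meet every point trivially, which forces $U_{\C}^{\perp'}=0$, i.e.\ $km=n\le 2m-3$; this is impossible for $k\ge 3$. In any case $r\ge 3$ already follows from $n\le 2m-3$.

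Theorem~\ref{thm:boundevasive} with $h=1$ and $r-2$ in place of $r$ now gives
\[ km-2m+r\le \frac{km(r-2)}{r-1}. \]
Multiplying by $r-1$ gives $km(r-1)-(2m-r)(r-1)\le km(r-2)$, that is, $km\le (2m-r)(r-1)$. Dividing by $m$,
\[ k\le 2(r-1)-\frac{r(r-1)}{m}=2(r-1)-\frac{r^2-r}{m}, \]
which is the first claim.

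For the second claim, since $r\ge 3$ we have $\frac{r^2-r}{m}>0$, so $k<2(r-1)$, hence $k\le 2r-3$ and $r\ge \frac{k+3}{2}$. When $k$ is odd, this gives $r\ge \frac{k+3}{2}=\lfloor\frac{k+4}{2}\rfloor$ directly. When $k$ is even, $r$ is an integer, so $r\ge \frac{k+4}{2}=\lfloor \frac{k+4}{2}\rfloor$. In both cases $n=2m-r\le 2m-\lfloor\frac{k+4}{2}\rfloor$.

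The main point needing care is confirming the hypotheses of Theorem~\ref{thm:boundevasive}, namely $k<m$ and $r-2\ge 1$; after that the argument is a direct computation.
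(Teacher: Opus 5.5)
Your proof is correct and follows the paper's argument: you combine the $(1,r-2)$-evasiveness of $U_{\C}^{\perp'}$ from Theorem~\ref{prop:UCperpscattgen} with the bound of Theorem~\ref{thm:boundevasive} and rearrange to $km\le(2m-r)(r-1)$. The only differences are minor. For the ``in particular'' part, the paper solves the quadratic $r^2-(2m+1)r+(k+2)m\le 0$ and bounds its discriminant by $(2m-k-1)^2$ to reach $r>(k+2)/2$, whereas you get the same inequality more directly by dropping the positive term $(r^2-r)/m$. You also explicitly verify the hypotheses $k<m$ and $r\ge 3$, which the paper uses only implicitly.
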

\begin{proof}
    Let $U_{\C}$ be a system associated with $\C$.
    By Theorem \ref{prop:UCperpscattgen}, we know that $U_{\C}^{\perp'}$ is a $(1,r-2)$-evasive subspace of dimension $mk-2m+r$ in $\Fqmk$.
    Therefore, by Theorem \ref{thm:boundevasive} we have that
    \[ mk-2m+r \leq \frac{mk(r-2)}{r-1}, \]
    from which we derive that 
    \begin{equation} \label{eq: BoundOnK}
         k \leq 2r-\frac{r^2}m-2+\frac{r}m. \end{equation}
    Now, by  (\ref{eq: BoundOnK}) we get the following lower bound on $r$
    \[r\geq \frac{(2m+1)-\sqrt{\Delta}}{2}\]
    where $\Delta=(2m+1)^2-4m(k+2)$. Since $\Delta < (2m-k-1)^2$, we get
    \[n=2m-r< 2m-\frac{k+2}{2}\] and hence the assertion.
\end{proof}

Therefore, we have the following characterization for rank-metric intersecting code of length $2m-3$.

\begin{corollary}\label{cor:charact2m-3}
    Let $\C$ be a nondegenerate $[2m-3,k,d]_{q^m/q}$ with $k\geq 3$. The code $\C$ is rank-metric intersecting if and only if 
    \begin{itemize}
        \item $k=3$;
        \item $m\geq 6$;
        \item $U_{\C}^{\perp'}$ is a scattered subspace.
    \end{itemize}
    Moreover, $d=m-1$.
    \end{corollary}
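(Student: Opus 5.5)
The plan is to specialise the results of this section to $r=3$, that is $n=2m-3$, for the forward direction. For the converse I would use Proposition~\ref{prop:nocompl} together with the duality formula \eqref{eq:dualweight}, which turns scatteredness into a lower bound on the weights of $2$-dimensional $\F_{q^m}$-subspaces.

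\emph{Forward direction.} Assume $\C$ is rank-metric intersecting and set $r=3$.
\begin{enumerate}[label=(\arabic*)]
\item Theorem~\ref{thm:k=3andm6} gives $k\le 4-6/m<4$. Since $k\ge 3$, this forces $k=3$.
\item Theorem~\ref{prop:UCperpscattgen} says that $U_\C^{\perp'}$ is $(1,r-2)=(1,1)$-evasive, i.e.\ scattered.
\item By Proposition~\ref{prop:dualityproperties}(ii), $\dim_{\fq}(U_\C^{\perp'})=3m-(2m-3)=m+3$. Theorem~\ref{thm:scattbound} then gives $m+3\le 3m/2$, hence $m\ge 6$.
\end{enumerate}

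\emph{The distance.} Apply \eqref{eq:dualweight} to $U_\C$ with $W=H$ an $\F_{q^m}$-hyperplane of $\fqm^3$ and $W^\perp=P$ a point. This gives
\[
\wt_{U_\C^{\perp'}}(P)-\wt_{U_\C}(H)=3m-(2m-3)-2m=3-m,
\]
so $\wt_{U_\C}(H)=\wt_{U_\C^{\perp'}}(P)+m-3$.
\begin{enumerate}[label=(\arabic*)]
\item Since $U_\C^{\perp'}$ is scattered and nonzero, the maximum of $\wt_{U_\C^{\perp'}}(P)$ over all points $P$ is exactly $1$.
\item Hence the maximum of $\wt_{U_\C}(H)$ over all hyperplanes is $m-2$.
\item Theorem~\ref{th:connection} then yields $d=(2m-3)-(m-2)=m-1$. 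This agrees with Corollary~\ref{cor:bounddistance}.
\end{enumerate}

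\emph{Converse.} Assume $k=3$ and that $V:=U_\C^{\perp'}$ is scattered of dimension $m+3$. By Proposition~\ref{prop:nocompl} and Theorem~\ref{thm:not2spannable}, it suffices to show that no $2$-dimensional $\F_{q^m}$-subspace $\ell$ meets $V$ in a subspace with complementary weights.
\begin{enumerate}[label=(\arabic*)]
\item Apply \eqref{eq:dualweight} to $V$ with $W=\ell$, and use $V^{\perp'}=U_\C$:
\[
\wt_{U_\C}(\ell^\perp)-\wt_V(\ell)=3m-(m+3)-2m=-3.
\]
Therefore $\wt_V(\ell)=\wt_{U_\C}(\ell^\perp)+3\ge 3$.
\item For any two distinct points $P_1,P_2$ with $\ell=P_1+P_2$, scatteredness gives $\wt_V(P_1)+\wt_V(P_2)\le 2<\wt_V(\ell)$.
\item So $\ell\cap V$ never has complementary weights. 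Hence $U_\C$ is not $2$-spannable, and $\C$ is intersecting. The hypothesis $m\ge6$ is then automatic from the dimension bound.
\end{enumerate}

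The only delicate point is handling the dualities correctly. One must apply \eqref{eq:dualweight} with the correct subspace: $U_\C$ for the distance, and $V$ for the converse. One must also keep the dimension count $m+3$ straight. The rest reduces to earlier results.
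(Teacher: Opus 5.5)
Your proof is correct. The equivalence follows essentially the paper's route. The one genuinely different step is how you obtain $d=m-1$.

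\textbf{The equivalence.} Necessity comes from Theorem~\ref{prop:UCperpscattgen} and Theorem~\ref{thm:k=3andm6}. You extract $m\ge 6$ from the Blokhuis--Lavrauw bound $m+3\le 3m/2$; it also drops out of $3\le 4-6/m$. For sufficiency, every $2$-dimensional $\fqm$-subspace meets $U_\C^{\perp'}$ in $\F_q$-dimension at least $3$, while points have weight at most $1$. You derive the dimension bound from \eqref{eq:dualweight}; the paper uses a plain dimension count. You correctly pass through Proposition~\ref{prop:nocompl} and then Theorem~\ref{thm:not2spannable}. At that step the paper's text cites Theorem~\ref{prop:UCperpscattgen}, but Proposition~\ref{prop:nocompl} is what is actually needed.

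\textbf{The minimum distance.} The paper first uses Corollary~\ref{cor:bounddistance} to get $d\in\{m-1,m\}$. It then excludes $d=m$ by invoking the classification of nondegenerate one-weight rank-metric codes as simplex codes, whose length $3m$ differs from $2m-3$.

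You argue directly instead. A nonzero scattered subspace has points of weight exactly $1$: take any $\langle u\rangle_{\fqm}$ with $0\ne u\in U_\C^{\perp'}$. By duality, hyperplane weights in $U_\C$ equal point weights in $U_\C^{\perp'}$ plus $m-3$, so they lie in $\{m-3,m-2\}$ and the value $m-2$ is attained. This gives $d=m-1$ in one step.

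\textbf{Comparison.} Your route is self-contained and avoids the external one-weight classification. The paper's route reuses the general distance bound of Corollary~\ref{cor:bounddistance}, which is valid for every length $n$.
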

    \begin{proof}
The necessary condition follows from Proposition \ref{prop:UCperpscattgen} and Theorem \ref{thm:k=3andm6}. 
For the converse, observe that  if ${U_{\C}^{\perp'}}$ is a scattered subspace of dimension $m+3$ in $\Fqm^3$, then every two-dimensional subspace $\ell \subseteq \Fqm^3$ intersects ${U_{\C}^{\perp'}}$ in a scattered subspace of dimension at least $3$. In particular, the intersection $\ell \cap {U_{\C}^{\perp'}}$ is not complementary weight. By Theorem \ref{prop:UCperpscattgen}, it follows that $U_{\C}$ is not $2$-spannable. The claim now follows from Theorem \ref{thm:not2spannable}.
For the last part, from Corollary \ref{cor:bounddistance} it follows that $d \in \{m-1,m\}$. The assertion then follows observing that if $d=m$ then $\C$ would be a one-weight code and so a rank-metric simplex code (see \cite{Randrianarisoa2020ageometric,alfarano2021linearcutting}), a contradiction with the length of $\C$.
\end{proof}

In the next section we will see some related existence results.

\bigskip

\section{Some existence results}\label{sec:existence}

In this section we investigate the existence of nondegenerate
$[2m-3,k,d]_{q^m/q}$ rank-metric intersecting codes with $k\geq 3$.
By Theorem~\ref{thm:k=3andm6}, such codes can exist only if $k=3$
and $m\geq 6$.

Therefore, by Corollary~\ref{cor:charact2m-3}, the existence problem of $[2m-3,3,d]_{q^m/q}$ rank-metric intersecting codes with $m\geq 6$ can be translated into a geometric problem, namely the existence of a
scattered $\F_q$-subspace of $\F_{q^m}^3$ of dimension $m+3$.
We now recall some known results on the existence of scattered
subspaces and derive the corresponding existence results for
rank-metric intersecting codes.

If $m$ is even, scattered $\F_q$-subspaces of $\F_{q^m}^3$ of dimension
$3m/2$ always exist, as proved in
\cite{bartoli2018maximum,csajbok2017maximum}.

\begin{theorem}[\cite{bartoli2018maximum,csajbok2017maximum}]
\label{thm:scattexist}
Suppose that $m$ is even. Then there exists a scattered $\F_q$-subspace
of $\F_{q^m}^3$ of dimension $3m/2$.
\end{theorem}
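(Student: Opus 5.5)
This statement is quoted from \cite{bartoli2018maximum,csajbok2017maximum}, so in the paper it needs no proof. If I had to prove it myself, I would give an explicit $3m/2$-dimensional $\fq$-subspace and check scatteredness directly. Write $m=2t$ and let $\F_{q^t}\subseteq\Fqm$ be the subfield of index $2$. The natural ambient model is
\[ U=\{(x,\;y,\;f(x,y)) : x\in\Fqm,\ y\in\F_{q^t}\}, \]
which has $\fq$-dimension $m+t=3m/2$ provided $f$ is $\fq$-linear. The task is to choose $f$, a suitable combination of $q$-semilinear maps in $x$ and $y$ with coefficients from $\Fqm\setminus\F_{q^t}$, so that no $1$-dimensional $\Fqm$-subspace meets $U$ in dimension $\ge 2$.

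The verification proceeds as follows. For a nonzero $v=(a,b,c)$, the intersection $U\cap\langle v\rangle_{\Fqm}$ is $\fq$-isomorphic to
\[ \Lambda_v=\{\lambda\in\Fqm : \lambda b\in\F_{q^t},\ f(\lambda a,\lambda b)=\lambda c\}. \]
So it suffices to show $\dim_{\fq}\Lambda_v\le 1$ for every $v$. I would split into the cases $b=0$, $a=0$, and $ab\neq 0$.
\begin{itemize}
\item If $b=0$, the condition reduces to $f(\lambda a,0)=\lambda c$. With $f(x,0)$ a single $q^s$-power with $\gcd(s,m)=1$, this becomes $\lambda^{q^s-1}=$ constant, which has solution space of dimension at most $1$.
\item If $ab\neq0$, write $\lambda=\mu/b$ with $\mu\in\F_{q^t}$. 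The equation turns into a linearized equation in $\mu$ over $\F_{q^t}$ with coefficients depending on $a,b,c$. The goal is to show that its $\fq$-solution space has dimension at most $1$. I would do this by applying $\mathrm{Tr}_{q^m/q^t}$ together with the conjugation $z\mapsto z^{q^t}$ to eliminate the $\Fqm$-coefficients. This yields two linearized equations over $\F_{q^t}$, whose common solutions can be bounded by a resultant or Dickson-matrix argument.
\end{itemize}

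The main obstacle is the degenerate points, where the leading coefficient of the equation in $\lambda$ vanishes. The two most naive choices already fail there:
\begin{itemize}
\item For $f=x^q+\delta y^q$, the points with $a^q=-\delta b^q$ and $c=0$ meet $U$ in dimension $t$.
\item For $f=x^q+\delta y$, the point $(0,1,\delta)$ meets $U$ in dimension $t$.
\end{itemize}
So $f$ has to mix the $q$-th and the $q^t$-th Frobenius in both variables, roughly $f(x,y)=x^{q}+\delta x^{q^{t+1}}+\varepsilon y^{q}$. The coefficients $\delta,\varepsilon$ must be chosen, for instance with $\mathrm{N}_{q^m/q^t}(\delta)\ne1$ and $\varepsilon\notin\F_{q^t}$, so that the degenerate systems become inconsistent. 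This is exactly the computation carried out in \cite{bartoli2018maximum,csajbok2017maximum}. I would first look for admissible $\delta,\varepsilon$ by forcing the $2\times 2$ Dickson-type determinant of the degenerate system to be nonzero for all $(a,b)$, and then handle small $q$ separately if a counting argument is needed to guarantee that such $\delta,\varepsilon$ exist.
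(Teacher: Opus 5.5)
Falling back on \cite{bartoli2018maximum,csajbok2017maximum} is exactly what the paper does, since it quotes the statement without proof. The self-contained route you sketch, however, fails at its central step, and no choice of $\delta,\varepsilon$ rescues it.

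Write $m=2t\geq 4$. For $\mu\in\F_{q^t}$ we have $\mu^{q^{t+1}}=\mu^q$, so your candidate $f(x,y)=x^q+\delta x^{q^{t+1}}+\varepsilon y^q$ satisfies $f(\mu x,\mu y)=\mu^q f(x,y)$. It is an $\fq$-linear map from $\Fqm\times\F_{q^t}$ (dimension $3t$) to $\Fqm$ (dimension $2t$), so $\ker f$ has dimension at least $t$. For every nonzero $(a,b)\in\ker f$ and every $\mu\in\F_{q^t}$, the vector $\mu(a,b,0)=(\mu a,\mu b,f(\mu a,\mu b))$ lies in $U$. Hence $\langle(a,b,0)\rangle_{\Fqm}$ meets $U$ in dimension at least $t\geq 2$, and $U$ is never scattered.

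Concretely, suppose $\mathrm{N}_{q^m/q^t}(\delta)\neq 1$. Then $z\mapsto z^q+\delta z^{q^{t+1}}$ is bijective, so there is $z$ with $z^q+\delta z^{q^{t+1}}=-\varepsilon$, and $\langle (z,1,0)\rangle_{\Fqm}$ has weight $t$. Indeed, for your $f$ the case $ab\neq 0$ collapses to the single equation $\mu^q\bigl((a/b)^q+\delta (a/b)^{q^{t+1}}+\varepsilon\bigr)=\mu\, c/b$. Its degenerate case (bracket zero, $c=0$) is therefore always realizable, so there is no Dickson-matrix or resultant argument to make. If instead $\mathrm{N}_{q^m/q^t}(\delta)=1$, the slice $b=0$ already has kernel $x_0\F_{q^t}$ for some $x_0\neq 0$.

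This is the same mechanism that killed $x^q+\delta y^q$. Adding $x^{q^{t+1}}$ cannot help, because it induces the same Frobenius on $\F_{q^t}$ as $x^q$ and $y^q$. So whatever the cited papers do, it is not this computation.

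There are two smaller problems. Your $b=0$ bullet assumes $f(x,0)$ is a single $q^s$-power, whereas your final $f$ has $f(x,0)=x^q+\delta x^{q^{t+1}}$. Whether that binomial is scattered on $\F_{q^{2t}}$ is a delicate question, and $\mathrm{N}_{q^m/q^t}(\delta)\neq 1$ only gives invertibility. Also, the case $ab\neq 0$ is announced but never carried out.

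The missing idea for any construction of your shape is this. The subspace $U\cap(\Fqm^2\times\{0\})=\ker f\times\{0\}$ always has dimension at least $t$, and it must itself be scattered. Consequently, the Frobenius exponents occurring in $f$ must not all be congruent modulo $t$; otherwise $\ker f$ is an $\F_{q^t}$-subspace and contains points of weight $t$. As it stands, your proposal supports the statement only through the citation, which is also all the paper offers.
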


Observe that for $m\geq 6$ we have $m+3 \leq 3m/2$. Hence a scattered
subspace of dimension $3m/2$ contains scattered subspaces of dimension
$m+3$, and therefore the existence of the former guarantees the
existence of the latter.

If $m$ is odd, the situation is less understood. Only partial results
are known; we refer the reader to \cite{polverino2020connections} for a
survey of the currently known constructions and bounds.

As a direct consequence of Corollary~\ref{cor:charact2m-3} and
Theorem~\ref{thm:scattexist}, we obtain the following existence result.

\begin{corollary}
Suppose that $m\geq 6$ is even. Then there exists a nondegenerate
$[2m-3,3,d]_{q^m/q}$ rank-metric intersecting code.
\end{corollary}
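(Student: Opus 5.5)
The plan is to combine Corollary~\ref{cor:charact2m-3} with Theorem~\ref{thm:scattexist}. Concretely, I will build a scattered subspace $S$ of $\F_{q^m}^3$ of dimension $m+3$, set $U=S^{\perp'}$, check that $U$ is a genuine $[2m-3,3]_{q^m/q}$ system, and then take the code associated with $U$.

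\textbf{Construction of $S$.} Since $m$ is even, Theorem~\ref{thm:scattexist} gives a scattered $\F_q$-subspace $V\subseteq\F_{q^m}^3$ with $\dim_{\F_q}V=3m/2$. The hypothesis $m\ge 6$ gives $m+3\le 3m/2$, so we may choose an $\F_q$-subspace $S\subseteq V$ with $\dim_{\F_q}S=m+3$. Any subspace of a scattered subspace is scattered, so $S$ is scattered.

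\textbf{Definition of $U$ and its dimension.} Put $U:=S^{\perp'}$, where $\perp'$ comes from the trace form of a nondegenerate reflexive bilinear form on $\F_{q^m}^3$. Proposition~\ref{prop:dualityproperties}(ii) gives $\dim_{\F_q}U=3m-(m+3)=2m-3$.

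\textbf{Nondegeneracy.} We must check that $\langle U\rangle_{\F_{q^m}}=\F_{q^m}^3$. Suppose instead that $U\subseteq H$ for some $\F_{q^m}$-hyperplane $H$. By items (iii) and (iv) of Proposition~\ref{prop:dualityproperties}, $S=U^{\perp'}\supseteq H^{\perp}$. Now $H^\perp$ is a $1$-dimensional $\F_{q^m}$-subspace, so $\dim_{\F_q}(S\cap H^\perp)=m\ge 6>1$. This contradicts scatteredness. Hence $U$ is an $[2m-3,3,d]_{q^m/q}$ system for some $d$.

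\textbf{Conclusion.} By Theorem~\ref{th:connection}, $U$ corresponds to a nondegenerate $[2m-3,3,d]_{q^m/q}$ code $\C$ with $U_\C=U$. Since $\perp'$ is induced by a reflexive nondegenerate form, $U_\C^{\perp'}=S$. Thus $U_\C^{\perp'}$ is scattered, $k=3$ and $m\ge 6$. Corollary~\ref{cor:charact2m-3} then says $\C$ is rank-metric intersecting, and moreover $d=m-1$.

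\textbf{Expected difficulty.} The only step needing care is the nondegeneracy check, that is, ruling out that $S^{\perp'}$ lies in a hyperplane. The duality argument above handles it. Everything else is a direct chain of earlier results.
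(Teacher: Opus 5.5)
Your proof is correct and follows the paper's route: take the scattered subspace of dimension $3m/2$ from Theorem~\ref{thm:scattexist}, pass to an $(m+3)$-dimensional subspace (using $m+3\le 3m/2$ for $m\ge 6$), and apply Corollary~\ref{cor:charact2m-3} to the code whose system is the $\perp'$-dual of that subspace. The paper leaves implicit that $S^{\perp'}$ is not contained in an $\F_{q^m}$-hyperplane, which is needed for the code to be nondegenerate; you check this explicitly, and your duality argument for it is sound.
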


This shows that the upper bound in Theorem~\ref{thm:boundsn}
is achieved for $k=3$ and every even $m\geq 6$.

\begin{example}\label{ex:scattmax}
The $[9,3,5]_{64/2}$ rank-metric code $\C$ generated by 
$$G=\left(\begin{matrix}
1&0&0&\alpha^{54}&\alpha^{18}&\alpha^{26}&\alpha^{32}&\alpha^{22}&\alpha^{19}\\
0&1&0&\alpha^{59}&\alpha^{12}&\alpha^{50}&\alpha^{49}&\alpha^{57}&\alpha^{5}\\
0&0&1&\alpha^{29}&\alpha^{56}&\alpha^{34}&\alpha^{61}&\alpha^{44}&\alpha^{54}
\end{matrix}\right),$$
where $\alpha^6=\alpha^4 + \alpha^3 + \alpha + 1$ is a primitive element of $\F_{64}$, is a rank-metric intersecting of length $2\cdot 6-3$. 
Indeed, $\C$ has as associated system a maximum scattered $U$ $\F_2$-subspace of $\F_{64}^3$.
Since $U^{\perp'}$ is a maximum scattered $\F_2$-subspace of $\F_{64}^3$, by Corollary \ref{cor:charact2m-3} we have that $\C$ is rank-metric intersecting.
\end{example}

\begin{remark}
The case where $m$ is odd remains largely open, since the existence of
scattered $\F_q$-subspaces of $\F_{q^m}^3$ with large dimension is not
fully understood. Any new construction of scattered subspaces of
dimension at least $m+3$ would immediately provide new examples of
rank-metric intersecting codes of length $2m-3$.
\end{remark}

\begin{remark}\label{rmk:punctur}
Let us remark that, for $m=6$, we know the existence of $[n,3]_{q^6/q}$ rank-metric intersecting codes for $5 \leq n \leq 7$ and for $n=9$. We also know that for $n<5$ and $n>9$, an $[n,3]_{q^6/q}$ rank-metric intersecting code cannot exist. However, the existence of an $[8,3]_{q^6/q}$ rank-metric intersecting code remains open. A natural question is whether such a code can be obtained by extending or puncturing (in the rank-metric sense) known constructions. However, this does not seem to be an easy task: an extensive search for $[8,3]_{64/2}$ codes obtained by puncturing the code in Example~\ref{ex:scattmax} fails to produce a rank-metric intersecting one.  
\end{remark}

\bigskip

\section{Non-existence of $[6,3,3]_{q^5/q}$ rank-metric intersecting codes}\label{sec:633codes}

As already observed in \cite{bartoli2025linear}, for $k=2$ or $k=3$ and $m \leq 4$, the set of values of $n$ for which a rank-metric intersecting code exists is known. The first open case occurs for $k=3$ and $m=5$.
In this case, by Theorem~\ref{thm:existence-old} and Theorem~\ref{thm:k=3andm6}, the only unknown value of $n$ for the existence of a rank-metric intersecting code is $n=6$. In \cite{bartoli2025linear}, it is proved that a $[6,3,3]_{32/2}$ rank-metric intersecting code does not exist, leaving the general case open (see Question~4.10). In the following, we show that a $[6,3,3]_{q^5/q}$ rank-metric intersecting code does not exist for any value of $q$.

To this aim we will need the projective description of a $q$-system; we will restrict to the case of the projective plane.

\begin{definition}
    Let $\Lambda=\mathrm{PG}(\fqm^3,\fqm)=\mathrm{PG}(2,q^m)$. 
Consider an $\fq$-subspace $U$ of $\fqm^3$ and define the set
\[ L_U=\{\langle {u} \rangle_{\fqm} \colon {u}\in U\setminus\{{0}\}\} \]
as an $\fq$-\textbf{linear set} of rank $\dim_{\fq}(U)$.
\end{definition}

Denote by $\tau_j$ the number of lines in $\mathrm{PG}(2,q^m)$ intersecting $L_U$ in $j$ points.
By a classical double counting argument, we obtain the following.

\begin{proposition}\label{prop:fundrel}
    Let $L_U$ be an $\fq$-linear set in $\mathrm{PG}(2,q^m)$, then 
    \[ \sum_{i=0}^{|L_U|} \tau_{i}=q^{2m}+q^m+1,\]
  \[\sum_{i=0}^{|L_U|} i\tau_{i}=|L_U|(q^m+1),\]
  \[\sum_{i=0}^{|L_U|} i(i-1)\tau_{i}=|L_U|(|L_U|-1).
    \]
\end{proposition}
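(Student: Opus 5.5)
The plan is a standard double count over incidences between points of $L_U$ and lines of $\mathrm{PG}(2,q^m)$. For each line $\ell$ we let $|\ell\cap L_U|$ denote the number of points of $L_U$ on $\ell$, so that $\tau_i$ is the number of lines with $|\ell\cap L_U|=i$. We use that every line meets $L_U$ in some number $i$ with $0\le i\le |L_U|$. We also use the two standard facts about $\mathrm{PG}(2,q^m)$: through each point pass exactly $q^m+1$ lines, and any two distinct points lie on exactly one line.

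The three identities then follow in turn.

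\textbf{First identity.} Each line of $\mathrm{PG}(2,q^m)$ is counted in exactly one $\tau_i$. Hence $\sum_i \tau_i$ is the total number of lines. By duality this equals the number of points, namely $(q^{3m}-1)/(q^m-1)=q^{2m}+q^m+1$.

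\textbf{Second identity.} We count the pairs $(P,\ell)$ with $P\in L_U$ a point and $\ell$ a line through $P$.
\begin{itemize}
\item Counting by lines, a line with $i$ points of $L_U$ contributes $i$ pairs, giving $\sum_i i\tau_i$.
\item Counting by points, each of the $|L_U|$ points lies on $q^m+1$ lines, giving $|L_U|(q^m+1)$.
\end{itemize}

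\textbf{Third identity.} We count the triples $(P,Q,\ell)$ with $P\neq Q$ both in $L_U$ and $\ell$ a line containing both.
\begin{itemize}
\item Counting by lines, a line with $i$ points of $L_U$ contributes $i(i-1)$ ordered pairs, giving $\sum_i i(i-1)\tau_i$.
\item Counting by ordered pairs of distinct points, each pair determines a unique line, giving $|L_U|(|L_U|-1)$.
\end{itemize}

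There is no real obstacle here. The only care needed is to note that the sums run over all lines, including those with $i=0$, which matters only for the first identity.
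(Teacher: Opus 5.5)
Your proposal is correct and is exactly the classical double-counting argument the paper cites without writing out. You count all lines for the first identity, point--line incidences with $q^m+1$ lines through each point for the second, and ordered pairs of distinct points of $L_U$, each on a unique line, for the third.
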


If $L_U$ has rank $n$ then the weight of any point is bounded by $n$. Denote by $N_i(U)$ the number of points of $\Lambda$ having weight $i\in \{0,\ldots,n\}$ in $L_U$, the following relations hold:
\begin{equation}\label{eq:card}
    |L_U| \leq \frac{q^n-1}{q-1},
\end{equation}
\begin{equation}\label{eq:pesicard}
    |L_U| =N_1(U)+\ldots+N_n(U),
\end{equation}
\begin{equation}\label{eq:pesivett}
    N_1(U)+N_2(U)(q+1)+\ldots+N_n(U)(q^{n-1}+\ldots+q+1)=q^{n-1}+\ldots+q+1.
\end{equation}

To our aim, we need the following result, which is a consequence of the Main Theorem of \cite{de2022weight}.

\begin{lemma}\label{lem:q5}
    Let $L_U$ be an $\fq$-linear set of rank $5$ in $\mathrm{PG}(1,q^5)$ with points of weight at most two.
    Then
    \[
    N_2(U)\leq \begin{cases}
        q^2+1, & \text{if } q>2,\\
        6, & \text{if } q=2.
    \end{cases}
    \]
\end{lemma}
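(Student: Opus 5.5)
The plan is to reduce Lemma~\ref{lem:q5} to the classification of weight spectra of linear sets of rank $5$ on $\mathrm{PG}(1,q^5)$ in the Main Theorem of \cite{de2022weight}. That theorem lists the possible values of $(N_1(U),N_2(U),\ldots)$ for a linear set of rank $m$ in $\mathrm{PG}(1,q^m)$ with small $m$. I would combine this list with the counting identities \eqref{eq:pesicard} and \eqref{eq:pesivett}.

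\emph{Setting up the counts.} Let $U$ be a $5$-dimensional $\fq$-subspace of $\fqm^2$ with $m=5$, and suppose every point has weight at most $2$. Relation \eqref{eq:pesivett} gives
\[ N_1(U)+(q+1)N_2(U)=q^4+q^3+q^2+q+1. \]
So the task is to bound the number of weight-$2$ points.

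\emph{Excluding the maximum-size configuration.} A point of weight $2$ corresponds to a $2$-dimensional $\fq$-subspace of $U$ of the form $U\cap\langle v\rangle_{\F_{q^5}}$. Two distinct such subspaces meet only in $0$, because distinct points meet only in $0$. So the weight-$2$ points give a partial $2$-spread of $U\cong\fq^5$, which yields $N_2(U)\le (q^5-q)/(q^2-1)$, i.e.\ $N_2(U)\le q^3+q$. This bound is much weaker than the one claimed.

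\emph{Sharpening the bound.} The improvement comes from the $\F_{q^5}$-structure. Since $5$ is prime, $U$ is not contained in an $\F_{q^{e}}$-subline with $1<e<5$. The Main Theorem of \cite{de2022weight} then says that a linear set of rank $5$ in $\mathrm{PG}(1,q^5)$ whose points all have weight at most $2$ has at most $q^2+1$ points of weight $2$ when $q>2$. In the small case $q=2$ there are sporadic examples, and there the bound becomes $6$. My plan is to quote that theorem. I would check that its hypotheses ("minimum size" or "points of weight at most $2$") match ours, and read off the stated bound in both cases.

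\emph{Where the difficulty lies.} The main obstacle is matching the statement of \cite{de2022weight} to our hypotheses. In particular, that paper may phrase its bound through $|L_U|$ instead of through $N_2(U)$. If so, I would use $|L_U|=N_1(U)+N_2(U)$ together with the displayed identity to get
\[ |L_U|=q^4+q^3+q^2+q+1-qN_2(U). \]
A lower bound on $|L_U|$, for instance $|L_U|\ge q^4+q^3-q-1$ when $q>2$, then translates directly into the upper bound on $N_2(U)$. The case $q=2$ has to be handled separately, either by the exceptional entry in the theorem or by a direct computer-free count in $\fq^5$.
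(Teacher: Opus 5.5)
Your plan is the paper's own route: the paper obtains Lemma~\ref{lem:q5} directly from the Main Theorem of \cite{de2022weight}, the classification of weight distributions of rank-$5$ linear sets in $\mathrm{PG}(1,q^5)$, with no further argument. One slip in your fallback concerns the size threshold. Since $|L_U|=q^4+q^3+q^2+q+1-qN_2(U)$, the bound $N_2(U)\le q^2+1$ is equivalent to $|L_U|\ge q^4+q^2+1$, and $N_2(U)\le 6$ is equivalent to $|L_U|\ge 19$ when $q=2$. Your sample threshold $|L_U|\ge q^4+q^3-q-1$ instead amounts to the different, stronger claim $N_2(U)\le q+2$, so it is not the size bound to look for in that theorem.
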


By means of a refined combinatorial argument on the linear set associated with a $[6,3,3]_{q^5/q}$ rank-metric intersecting code, we prove that such a code does not exist.

\begin{theorem}
   A $[6,3,3]_{q^5/q}$ rank-metric intersecting codes does not exist for any prime power $q$.  
\end{theorem}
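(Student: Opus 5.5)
The plan is to pass to the dual system and derive a contradiction by double counting in the projective plane $\mathrm{PG}(2,q^5)$. Suppose $\C$ is a $[6,3,3]_{q^5/q}$ rank-metric intersecting code, and put $U=U_\C$ (so $\dim_{\fq}U=6$) and $W=U^{\perp'}$, which has $\dim_{\fq}W=15-6=9$. Here $n=2m-r$ with $r=4$, so Theorem~\ref{prop:UCperpscattgen} says that $W$ is $(1,2)$-evasive, i.e.\ every point of $L_W$ has weight at most $2$.

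\textbf{Step 1: line weights of $W$.} By \eqref{eq:dualweight}, for a line $\ell=Q^\perp$ we have $\wt_W(\ell)=\wt_U(Q)+4$. Hence $\wt_W(\ell)\ge 4$ always, and $\wt_W(\ell)\ge 5$ exactly when $Q\in L_U$. Since $d=3$, every hyperplane of $\fqm^3$ meets $U$ in dimension at most $3$, so $\wt_U(Q)\le 3$ and $\wt_W(\ell)\le 7$. The restriction $W\cap\ell$ is an $\fq$-subspace of $\ell\cong\fqm^2$ whose points have weight at most $2$. Theorem~\ref{thm:boundevasive}, with $k=2<m=5$, then gives $\dim(W\cap\ell)\le 20/3$. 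So every line has weight $4$, $5$ or $6$, and the number of lines of weight at least $5$ equals $|L_U|\le (q^6-1)/(q-1)$.

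\textbf{Step 2: translating the intersecting condition.} By Proposition~\ref{prop:nocompl} and Theorem~\ref{thm:not2spannable}, for every line $\ell$ and distinct points $P_1,P_2\in\ell$ we need $\wt_W(\ell)>\wt_W(P_1)+\wt_W(P_2)$. This gives the key constraints:
\begin{itemize}
\item a weight-$4$ line contains at most one point of weight $2$;
\item any two weight-$2$ points of $L_W$ span a line of weight $5$ or $6$.
\end{itemize}
For a weight-$5$ line, Lemma~\ref{lem:q5} bounds its number of weight-$2$ points by $q^2+1$ (by $6$ if $q=2$). For a weight-$6$ line I would use \eqref{eq:pesivett} on $\ell$, together with the fact that $W\cap\ell$ is $(1,2)$-evasive of dimension $6$ in $\fqm^2$, to get a similar explicit upper bound on its number of weight-$2$ points. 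One option is to restrict to a $5$-dimensional subspace and apply Lemma~\ref{lem:q5} to it.

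\textbf{Step 3: the double count.} Write $N_2=N_2(W)$. Counting ordered pairs of distinct weight-$2$ points through the lines they span gives
\[
N_2(N_2-1)\le \sum_{\wt_W(\ell)\in\{5,6\}} N_2(\ell)\bigl(N_2(\ell)-1\bigr),
\]
and the right-hand side is at most about $|L_U|\,q^4\approx q^9$ by Steps 1 and 2. This forces $N_2=O(q^{4.5})$.

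For the matching lower bound on $N_2$, fix a point $P$ and count vectors of $W$ on the $q^5+1$ lines through $P$. Each such line has weight at least $4$, and the excess of a line of weight $4+j$ is governed by the $U$-weight of the corresponding point of $P^\perp$. Combining \eqref{eq:pesivett} applied to $W$ with the identities of Proposition~\ref{prop:fundrel}, applied both to $L_W$ and to $L_U$ (using that $U$ has point weights at most $3$ and $|L_U|$ is controlled by \eqref{eq:pesicard}--\eqref{eq:pesivett}), I expect to show that $N_2$ must be of order $q^{5}$ or more. Since $N_1+(q+1)N_2=(q^9-1)/(q-1)$, this would follow if the weight-$1$ points alone cannot fill the structure. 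That contradicts the upper bound, and small $q$ such as $q=2$ would be checked separately with the exact constants.

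\textbf{Main obstacle.} I expect the hardest step to be obtaining a lower bound on $N_2$ that is strong enough. The linear counts of Proposition~\ref{prop:fundrel} are automatically consistent, so the contradiction has to come from the second-moment identity $\sum i(i-1)\tau_i=|L_W|(|L_W|-1)$, combined with the restriction that weight-$4$ lines carry at most one weight-$2$ point. I would first try expressing $|\ell\cap L_W|=\frac{q^{\wt(\ell)}-1}{q-1}-qN_2(\ell)$ for each line, substituting into that identity, and bounding everything in terms of $N_2$ and $|L_U|$.
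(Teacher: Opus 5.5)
Your Steps 1 and 2 correctly dualize what the paper uses. Lines of $W$-weight $4,5,6$ are the polars of points of $U$-weight $0,1,2$, and points of $W$-weight $2$ are the polars of lines of $U$-weight $3$. Your rule that a weight-$4$ line carries at most one weight-$2$ point is the paper's property that a point of $U$-weight $0$ lies on at most one line of $U$-weight $3$. Lemma~\ref{lem:q5} applied to $W\cap\ell$ for a weight-$5$ line $\ell$ is dual to the paper's use of it on the projection $U_P$.

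The argument breaks in Step 3, and the lower bound you expect there is false. If there is no weight-$6$ line (i.e.\ $U$ is scattered), every line meets $L_U$ in $1$, $q+1$ or $q^2+q+1$ points. The three identities of Proposition~\ref{prop:fundrel} for $L_U$ then give exactly $N_2(W)=\tau_{q^2+q+1}=(q^2+1)(q^2+q+1)$. So $N_2$ is of order $q^4$, not $q^5$, and the roughly $q^8$ weight-$1$ points do fill $W$. Your global double count only gives $N_2(N_2-1)\le |L_U|\,q^2(q^2+1)\approx q^9$, which this value satisfies, since $N_2(N_2-1)\approx q^8$. The loss comes from summing the per-line maximum $q^2+1$ over all $\approx q^5$ weight-$5$ lines, which ignores that each weight-$2$ point lies on only $q^2+q+1$ of them. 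The second-moment identity alone gives no contradiction; it only pins down $N_2$.

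What is missing is a local count at a single weight-$2$ point $P_0$ of $W$. In the scattered case, every other weight-$2$ point lies on a line through $P_0$ of weight at least $5$, and there are at most $q^2+q+1$ such lines. To see this, distribute the $q^9-q^2$ vectors of $W\setminus P_0$ over the $q^5+1$ lines through $P_0$, all of weight at least $4$. This leaves an excess of $q^7-q^4$, and each line of weight at least $5$ uses at least $q^5-q^4$ of it. Lemma~\ref{lem:q5} then gives $N_2\le 1+q^2(q^2+q+1)<(q^2+1)(q^2+q+1)$ for $q>2$. This is exactly the paper's argument on the $U$-side: fix a line of weight $3$, note that every other such line meets it in a point of $L_U$, and bound by $q^2+1$ the number of weight-$3$ lines through each such point.

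The weight-$6$ case also needs care. Restricting to a $5$-dimensional subspace does not work, since weight-$2$ points may drop to weight $1$ there. Instead, argue as follows. There is at most one weight-$6$ line $\ell_0$, because two would meet in a point of weight at least $3$. The line $\ell_0$ carries $q^3+q^2+q+1$ weight-$2$ points, polar to the lines of $U$-weight $3$ through $\ell_0^{\perp}$. A weight-$2$ point off $\ell_0$ would therefore lie on $q^3+q^2+q+1>q^2+q+1$ lines of weight $5$, contradicting the local bound. Yet the identities for $L_U$ force $q^4+q^3+q^2$ lines of $U$-weight $3$ missing $\ell_0^{\perp}$, i.e.\ that many weight-$2$ points off $\ell_0$. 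In this case the identities have a nonnegative solution, e.g.\ $\tau_{q+1}=q^8+q^7+q^6-q^5-2q^4-2q^3-q^2$. So it is this incidence argument, not a sign contradiction, that excludes the case.

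Finally, for $q=2$ the constant $6$ in Lemma~\ref{lem:q5} only gives $N_2\le 36$ against $N_2=35$. Redoing the count with exact constants therefore does not settle $q=2$; that case rests on the computational result of \cite{bartoli2025linear}.
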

\begin{proof} 
By Theorem \ref{thm:not2spannable}, there exists a $[6,3,3]_{q^5/q}$  rank-metric intersecting code if and only if there exists a $[6,3,3]_{q^5/q}$ system $U$ which is not 2-spannable.
Let $U$ be a $[6,3,3]_{q^5/q}$ system in $\F_{q^5}^3$, and 
 suppose  that $U$ is not $2$-spannable. Since $n=6$, $d=3$ and  $U$ is not 2-spannable, the following properties follow readily:\\
 1) Each two-dimensional $\F_{q^5}$-subspace (we will call it a line) of $\F_{q^5}^3$ has weight at least one and at most three in $U$. \\
2) For each one-dimensional $\F_{q^5}$-subspace (we will call it a point) of weight $0$ in $U$, at most one line  through that point has weight three in $U$.\\
3)  Either  $U$ is a scattered subspace, or $U$ has exactly one point of weight $2$, while  all remaining points have weight at most $1$ (otherwise we would have lines of weight larger than $3$).

Let $L_U$ be the $\F_q$-linear set defined by $U$ in the projective plane $\mathrm{PG}(2,q^5)$, and let $\tau_i$ be the number of lines of $\mathrm{PG}(2,q^5)$ intersecting $L_U$ in $i$ points.
By Proposition \ref{prop:fundrel}, for the integers $\tau_i$ the following identities hold

 \begin{equation} \label{eq:a}
  \sum_{i=0}^{|L_U|} \tau_{i}=q^{10}+q^5+1
\end{equation}

\begin{equation} \label{eq:b}
  \sum_{i=0}^{|L_U|} i\tau_{i}=|L_U|(q^5+1)
\end{equation}
\begin{equation} \label{eq:c}
  \sum_{i=0}^{|L_U|} i(i-1)\tau_{i}=|L_U|(|L_U|-1),
\end{equation} 

From 1), 2) and 3) together with \eqref{eq:pesicard} and \eqref{eq:pesivett}, we get that $\tau_i=0$ if $i\neq 1,q+1,q^2+1,q^2+q+1$. Also, if $U$ is a scattered subspace, then $\tau_{q^2+1}=0$ and $|L_U|=q^5+q^4+q^3+q^2+q+1$, while if $U$ has exactly one point of weight $2$, and  all remaining points have weight at most $1$, then  $\tau_{q^2+1}=q^3+q^2+q+1$ and $|L_U|=q^5+q^4+q^3+q^2+1$.
In this second case, from Equations (\ref{eq:a}), (\ref{eq:b}) and (\ref{eq:c})
we get 
\[\tau_{q+1}=
\frac{-q^{11}-q^{10}+q^9+3q^8+4q^7+3q^6+q^5+2q^4+2q^3+2q^2+q}{q^3+2q^2+q+1},\]

which gives a negative number for every prime power $q$, a contradiction.
Hence, $U$ is a scattered subspace and from Equations (\ref{eq:a}), (\ref{eq:b}) and (\ref{eq:c})
we get 
\begin{equation} \label{eq: number of lines}
    \tau_1=q^{10}-q^8-q^7-q^6+q^5+q^3, \quad \tau_{q+1}=q^8+q^7+q^6-q^4-2q^3-2q^2-q, \quad \tau_{q^2+q+1}=
q^4+q^3+2q^2+q+1.\end{equation}

Let $P$ be any point of $L_U$, so $P=\langle u\rangle_{\F_{q^5}}$ where $u \in U\setminus \{0\}$.
Consider the projection $L_{U_P}$ of $L_U$ from the point $P$ in the quotient space $\F_{q^5}^3/P$, i.e. $U_P=U+P$ in the quotient vector space $\F_{q^5}^3/P$. Then $L_{U_P}$ is an $\F_q$-linear set of rank $5$ in $\mathrm{PG}(1,q^5)=\mathrm{PG}(\F_{q^5}^3/P,\F_{q^5})$. By Lemma \ref{lem:q5}, the linear set $L_{U_P}$ admits at most $q^2+1$ points of weight $2$ when $q>2$ and at most $6$ when $q=2$.
This means that the number  $\tau_{q^2+q+1}^P$ of lines in $\mathrm{PG}(2,q^5)$ of weight $3$ in $U$ passing  through $P$ is at most $q^2+1$.

Now, let $\ell$ be a fixed line with weight $3$ in $U$. By $2)$, any other line with weight $3$ in $U$ intersects $\ell$ in a point of $L_U$.
Counting the number of lines of weight $3$ in $U$, for $q>2$ we obtain
\[\tau_{q^2+q+1}=1+\sum_{P\in \ell \cap L_U}(\tau_{q^2+q+1}^P-1) \leq q^2|\ell\cap L_U|+1=q^2 (q^2+q+1)+1=q^4+q^3+q^2+1,\]

which contradicts (\ref{eq: number of lines}).
A similar contradiction can be obtained when $q=2$.
\end{proof}

The above theorem answers Question 4.10 of \cite{bartoli2025linear}.

\bigskip

\section{Conclusions}

In this paper we investigated the existence problem for linear rank-metric intersecting codes.
Starting from the geometric characterization of intersecting codes in terms of $q$-systems
introduced in \cite{bartoli2025linear}, we analyzed the structure of the associated $\F_q$-subspaces and
their duals. We first showed that the dual subspace associated with a rank-metric intersecting code must
satisfy strong evasiveness properties. More precisely, if $\C$ is a nondegenerate
$[n,k,d]_{q^m/q}$ rank-metric intersecting code with $k \ge 3$, then the dual system
$U_{\C}^{\perp'}$ is a $(1,2m-n-2)$-evasive subspace of $\F_{q^m}^k$.
Using known bounds on evasive subspaces, this property allowed us to derive new restrictions
on the parameters of rank-metric intersecting codes and to refine the upper bound on the
length.
More precisely, 
\[n\leq 2m- \left\lfloor \frac{k+4}2\right\rfloor.\]
Moreover, we proved that the maximum possible length $n=2m-3$ can occur only in the
case $k=3$ and $m\ge 6$. We also obtained a geometric characterization of such codes:
a nondegenerate $[2m-3,k,d]_{q^m/q}$ rank-metric intersecting code exists if and only if
$k=3$, $m\ge 6$, and the dual system $U_{\C}^{\perp'}$ is a scattered subspace of
$\F_{q^m}^3$ of dimension $m+3$, and as a consequence $d=m-1$. This characterization allows us to translate the existence problem for these codes into
the existence problem for scattered $\F_q$-subspaces of $\F_{q^m}^3$ with prescribed
dimension. Using known constructions of maximum scattered subspaces, we showed that
rank-metric intersecting codes with parameters $[2m-3,3,d]_{q^m/q}$ exist for every even
$m\ge 6$, proving that the upper bound $n\le 2m-3$ is tight in this case.
The above argument can also be extended by considering maximum $h$-scattered subspaces. However, the corresponding rank-metric intersecting codes rely on the existence of such maximum $h$-scattered subspaces for certain parameter choices, which are currently unknown. Finally, we completed the analysis of the boundary cases by proving that
$[6,3,3]_{q^5/q}$ rank-metric intersecting codes do not exist for any prime power $q$.
This result provides a complete answer to an open problem posed in \cite{bartoli2025linear}.

Several questions remain open. In particular, when $m$ is odd the existence of scattered
subspaces of dimension at least $m+3$ in $\F_{q^m}^3$ is not fully understood, and therefore
the existence of rank-metric intersecting codes of length $2m-3$ in this case remains open. Moreover, as highlighted  in Remark \ref{rmk:punctur}, the existence of $[2m-4,3]_{q^m/q}$ rank-metric intersecting codes is open also in the case where $m$ is even.
A more refined study on the geometric point of view on the Delsarte dual of an intersecting rank-metric code may be fruitful; see e.g. \cite{marino2023evasive,borello2025delsarte}.

\bigskip

\section*{Acknowledgments}
This research was also supported by Bando Galileo 2024 – G24-216. The first author is partially supported by the ANR-21-CE39-0009 - BARRACUDA (French \emph{Agence Nationale de la Recherche}).
The last two authors were partially supported by the Italian National Group for Algebraic and Geometric Structures and their Applications (GNSAGA - INdAM).
This work has been partially written during the Opera 2026 conference in Bordeaux and so we acknowledge the support from the International Research Laboratory LYSM in partnership between CNRS and INdAM.

\bigskip

\bibliographystyle{alpha}
\bibliography{biblio.bib} 

\end{document}